\theoremstyle{plain}
\newtheorem{thm}{Theorem}[section]
\newtheorem{theorem}[thm]{Theorem}
\newtheorem{cor}[thm]{Corollary}
\newtheorem{lemma}[thm]{Lemma}
\newtheorem{prop}[thm]{Proposition}
\theoremstyle{definition}
\newtheorem{defn}[thm]{Definition}
\newtheorem{example}[thm]{Example}
\theoremstyle{remark}
\newtheorem{rem}[thm]{Remark}
\newtheorem*{claim}{Claim}
\newcommand{\Defeq}{\stackrel{\mathrm{df}}{=}}
\newcommand{\union}{\cup}
\newcommand{\Union}{\bigcup}
\newcommand{\inter}{\cap}
\newcommand{\cross}{\times}
\newcommand{\pow}{\mc P}
\newcommand{\res}{\restriction}
\newcommand{\tranc}[1]{\stackrel{#1}{\rightarrow}_{\mc C}}
\newcommand{\trand}[1]{\stackrel{#1}{\rightarrow}_{\mc D}}
\newcommand{\dtranc}[2]{\stackrel{#1,#2}{\rightarrow_{\mc C}}}
\newcommand{\dtrand}[2]{\stackrel{#1,#2}{\rightarrow_{\mc D}}}
\newcommand{\rtranc}[1]{\stackrel{#1}{\rightsquigarrow}_{\mc C}}
\newcommand{\rtrand}[1]{\stackrel{#1}{\rightsquigarrow}_{\mc D}}
\newcommand{\rdtranc}[2]{\stackrel{#1,#2}{\rightsquigarrow_{\mc C}}}
\newcommand{\rdtrand}[2]{\stackrel{#1,#2}{\rightsquigarrow_{\mc D}}}
\newcommand{\card}[1]{|#1|}
\newcommand{\lab}{\ell}
\newcommand{\Par}{\mathrel{\mbox{$\! \mid\! $}}}
\newcommand{\Comment}[1]{}
\newcommand{\tran}[1]{\stackrel{#1}{\rightarrow}}
\newcommand{\rtran}[1]{\stackrel{#1}{\rightsquigarrow}}
\newcommand{\mc}[1]{\mathcal{#1}}
\newcommand{\mn}[1]{\mathsf{min}(#1)}
\newcommand{\depth}[2]{\mathsf{depth}_{#1}(#2)}
\newcommand{\Nat}{\mathbb{N}}
\newcommand{\Act}{\mathsf{Act}}
\newcommand{\ibeqt}{\approx_{ib}}
\newcommand{\sbeqt}{\approx_{sb}}
\newcommand{\hheqt}{\approx_{hh}}
\newcommand{\rbeqt}{\approx_{rb}}
\newcommand{\rsbeqt}{\approx_{rsb}}
\newcommand{\dbeqt}{\approx_{db}}
\newcommand{\rdbeqt}{\approx_{rdb}}
\newcommand{\rhsbeqt}{\approx_{rhsb}}
\newcommand{\rhesbeqt}{\approx_{rhesb}}
\newcommand{\co}{\mathrel{co}}
\newcommand{\Cstable}{\mathbb{C}_{stable}}
\title{Reverse Bisimulations on Stable Configuration Structures}
\author{Iain Phillips
\institute{Department of Computing, Imperial College London, England}
\email{iccp@doc.ic.ac.uk}
\and
Irek Ulidowski
\institute{Department of Computer Science, University of Leicester, England}
\email{iu3@mcs.le.ac.uk}
}
\begin{document}
\maketitle

\begin{abstract}
The relationships between various equivalences on configuration structures, including interleaving bisimulation (IB), step bisimulation (SB) and hereditary history-preserving (HH) bisimulation, have been investigated by van Glabbeek and Goltz (and later Fecher).  Since HH bisimulation may be characterised by the use of reverse as well as forward transitions, it is of interest to investigate forms of IB and SB where both forward and reverse transitions are allowed.
We give various characterisations of reverse SB, showing that forward steps do not add extra power.  We strengthen Bednarczyk's result that, in the absence of auto-concurrency, reverse IB is as strong as HH bisimulation, by showing that we need only exclude auto-concurrent events at the same depth in the configuration.

\end{abstract}

\section{Introduction}
The relationships between various equivalences on configuration structures, including interleaving bisimulation (IB), step bisimulation (SB) and hereditary history-preserving (HH) bisimulation, have been investigated by van Glabbeek and Goltz~\cite{vGG01} (and later Fecher~\cite{Fec04}).
Since HH bisimulation may be characterised by the use of reverse as well as forward transitions, it is of interest to investigate forms of IB and SB where both forward and reverse steps are allowed.
We give various characterisations of RSB, showing among other things that forward steps do not add extra power.  We strengthen Bednarczyk's result that, in the absence of auto-concurrency, IB with reverse transitions (which we shall call reverse bisimulation or RB) is as strong as HH bisimulation, by showing that we need only exclude auto-concurrent events at the same depth in the configuration.

In this paper we adopt as our model of concurrency the framework of {\em stable configuration structures},
following van Glabbeek and Goltz.
These can be regarded as a more abstract version of stable event structures~\cite{Win87}.
A {\em configuration} is a set of events which is a possible run within an event structure.
Using stable configuration structures means that our results can be directly related to those of van Glabbeek and Goltz, and of Fecher.
In our earlier work on reversible computation~\cite{PU07a} we employed {\em prime} event structures,
which are a special case of stable ESs.
Prime event structures are technically simpler;
they have a global causal ordering on events, whereas with
stable event structures causal orderings are parametrised by configuration.
Stable event structures have the advantage over prime event structures that such operations as
sequential and parallel composition can be modelled more easily.
See e.g.~\cite{vGG01} for further discussion of the merits of the various forms of event structure.

The question then arises as to when two configuration structures should be regarded as equivalent.
We do not want to distinguish configuration structures merely on the basis that they have different sets of events.
So we assume that events come with a labelling (we use $a,b,c$ for labels); events with the same label are equivalent.
Simple configuration structures can be written in a CCS-style notation~\cite{Mil89}, with conflict represented by choice ($+$),
concurrency by parallel composition ($\; \Par\; $), and causal ordering by action prefixing ($.$).
As a simple example, the law $a+a = a$ will hold for any reasonable equivalence on configuration structures,
even though the configuration structure represented by $a+a$ has two (conflicting) events and $a$ only has one.

{\em Hereditary history-preserving} (HH) bisimulation is considered as the finest well-behaved true concurrency
equivalence relation that ``completely respects the causal and branching structure 
of concurrent systems and their interplay''~\cite{vGG01}.
Here ``well-behaved'' means that it preserves refinement of actions.
HH bisimulation was proposed by Bednarczyk~\cite{Bed91}.
It can be regarded as the canonical equivalence on event structures, in view of its category-theoretical characterisation as open map bisimulation with labelled partial orders as the observations~\cite{JNW96}.
HH bisimulation is defined over configuration structures and, in addition
to matching configurations and the transitions between configurations, it also
keeps a history of the matched events along matching computations.
This is achieved by means of label-preserving and order-preserving isomorphisms
between the elapsed events of the two configuration structures.
HH bisimulation and its decidability were further researched by Fr\"{o}schle in~\cite{Fro04}
and in her subsequent papers.

Various bisimulation-based equivalences weaker than HH bisimulation were studied by van Glabbeek and Goltz~\cite{vGG01} and Fecher~\cite{Fec04}.
The ones that are most relevant here are interleaving bisimulation (IB) and step bisimulation (SB).
IB is just standard bisimulation based on single event transitions.
A popular method for increasing the discriminating power of a bisimulation is to
generalise single action transitions $X\tran{a}X'$  to transitions $X\tran{\mu}X'$
where $\mu$ is a structure richer than a single action~\cite{Pom86,BC87,Che92}. 
It could be a set of events
occurring concurrently (or a multiset of action labels of these events),
a so-called ``step'', 
or even a pomset (which we shall not consider here).
SB is based on step transitions.

HH bisimulation requires the isomorphisms to be consistent under both forward and backward transitions between configurations.  It is therefore natural to look at the power of forms of IB and SB where reverse transitions are allowed as well as forward ones.  Adding reverse transitions to IB gives us what we shall call {\em reverse bisimulation} (RB).
Adding reverse transitions to SB gives us {\em reverse SB} (RSB).

RB was already investigated by Bednarczyk~\cite{Bed91}.
When there is no auto-concurrency (concurrent events with the same label), RB equivalence is finer than
many true concurrency bisimulations~\cite{vGG01} up to and including
history-preserving bisimulation~\cite{DDNM87,RT88}. The so-called {\em absorption law}~\cite{vGG01}
$$(a \Par (b + c)) + (a \Par b) + ((a + c) \Par b) =
(a \Par (b + c)) + ((a + c) \Par b)$$
is not valid for RB equivalence:
If one performs $a$ and then $b$
with the $a \Par b$ component on the left, then these must be matched
by the $a$ and then the $b$
of the $((a + c) \Par b)$ summand on the right.
(Matching it with the $a$ of $(a \Par (b + c))$ is wrong, as after this
$a$ is performed, no $c$ is possible after $a$ in $a \Par b$.)
The right hand side
can now reverse $a$ and do a $c$ (still using the same summand as all other summands
are disabled). The left hand side cannot match this.

In fact, Bednarczyk proved that, in the absence of auto-concurrency, RB equivalence has the same power as HH equivalence (on prime event structures).
We shall prove an extension of this result: RB equivalence has the same power as HH equivalence in the absence of {\em equidepth} auto-concurrency, i.e.\ when we cannot have two events with the same label occurring at the same depth within a configuration.
The {\em depth} of an event $e$ is the length of the longest causal chain of events up to and including $e$.

When auto-concurrency is present, RB is unable to distinguish such simple processes as $a \Par a$ and $a.a$.
This motivates study of RSB.\footnote{RSB was briefly mentioned by Bednarczyk; he asked the following question: is RSB as fine as HH bisimulation?
We intend to settle this question in a forthcoming extended version of the present paper.}

If we allow forward step transitions, but only single reverse transitions,
then we already can distinguish $a \Par a$ from $a.a$ very easily:
$a \Par a$ can do an $\{a,a\}$ step whereas $a.a$ cannot.
However we cannot distinguish $a \Par a$ from $(a \Par a) + a.a$.
Here the reverse steps are needed, and $a \Par a$ is not equivalent to $(a \Par a) + a.a$ for RSB;
in $(a \Par a) + a.a$ we can perform two $a$s in sequence and get to a configuration where
we cannot do a reverse step $\{a,a\}$, unlike for $a \Par a$.

We show that all the power of RSB equivalence resides in the reverse step transitions,
with the forward steps being dispensable (though of course often useful in examples).
In fact, the reverse steps can be restricted to those which are {\em homogeneous}, by which we mean that all events have the same label.
One can even restrict attention to reverse homogeneous {\em equidepth} steps, where all events have the same depth.
We also show that RSB equivalence preserves depth, in the sense that corresponding events must have the
same depth.

The paper is organised as follows. In Section~\ref{sec:cs} we define stable configuration structures and various bisimulation equivalences, including reverse forms.
Section~\ref{sec:chars} shows that reverse step bisimulation can be characterised as
reverse homogeneous step bisimulation, as
reverse homogeneous equidepth step bisimulation,
and as reverse depth-preserving bisimulation.
In Section~\ref{sec:rb hh} we show that Bednarczyk's result still holds in the absence of equidepth auto-concurrency.
We then draw some conclusions.

\section{Stable configuration structures and equivalences}\label{sec:cs}
We define configuration structures much as in~\cite{vGG01}, with the omission of the termination predicate for simplicity.
We keep as close as possible to~\cite{vGG01} in most of the definitions in this section.

We assume a set of action labels $\Act$, ranged over by $a,b,\ldots$.
\begin{defn}\label{def:cs}
A {\em configuration structure} (over an alphabet $\Act$) is a pair $\mc C = (C,\lab)$ where $C$ is a family of finite sets (configurations) and $\lab:\Union_{X \in C} X \to \Act$ is a labelling function.
\end{defn}
We denote the domain of stable configuration structures by $\Cstable$.
We use $C_{\mc C}, \lab_{\mc C}$ to refer to the two components of a configuration structure $\mc C$.
Also we let $E_{\mc C} = \Union_{X \in C} X$, the {\em events} of $\mc C$.
We let $e,\ldots$ range over events, and $E,F,\ldots$ over sets of events.
\begin{defn}
A configuration structure $\mc C = (C,\lab)$ is {\em stable} if it is
\begin{itemize}
\item
rooted: $\emptyset \in C$;
\item
connected: $\emptyset \neq X \in C$ implies $\exists e \in X: X \setminus\{e\} \in C$;
\item
closed under bounded unions: if $X,Y,Z \in C$ then $X \union Y \subseteq Z$ implies $X \union Y \in C$;
\item
closed under bounded intersections: if $X,Y,Z \in C$ then $X \union Y \subseteq Z$ implies $X \inter Y \in C$.
\end{itemize}
\end{defn}
Any stable configuration structure is the set of configurations of a 
stable event structure~\cite[Theorem 5.3]{vGG01}.
\begin{figure}
\psfrag{C}{$\mc C$}
\psfrag{D}{$\mc D$}
\psfrag{0}{$\emptyset$}
\psfrag{{0}}{$\{0\}$}
\psfrag{{1}}{$\{1\}$}
\psfrag{{0,1}}{$\{0,1\}$}
\psfrag{{0,b}}{$\{0,b\}$}
\psfrag{{1,b}}{$\{1,b\}$}
\psfrag{{0,1,b}}{$\{0,1,b\}$}
\centering
\includegraphics[trim = 0in 0.7in 0in 0in, clip,width=4in]{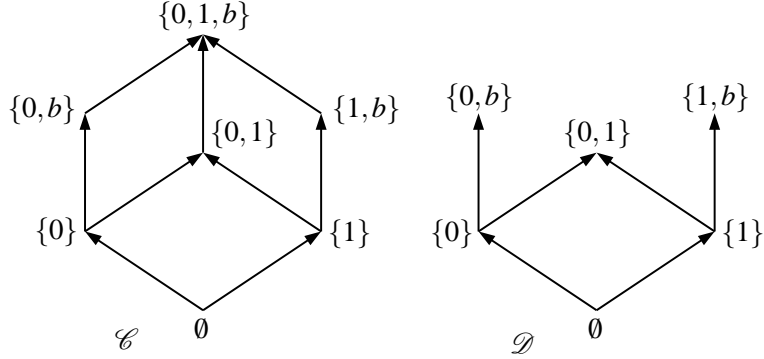}
\caption{Non-stable and stable configuration structures.}\label{config2}
\end{figure}
In Figure~\ref{config2} we give two example configuration structures derived from examples of Winskel~\cite{Win87}.
In each of $\mc C, \mc D$ the labelling can be taken to be the identity function.
Configuration structure $\mc C$ models a ``parallel switch'' where events $0$ or $1$ can light the bulb $b$.
If both $0$ and $1$ occur then $b$ can also happen.
This is {\em inclusive} ``or'' causation.
We see that $\mc C$ is not stable, since it is not closed under bounded intersections:
$\{0,b\} \union \{1,b\}$ is bounded by $\{0,1,b\}$, but $\{0,b\} \inter \{1,b\} = \{b\} \notin C_{\mc C}$.
By contrast, $\mc D$ is  stable; it models a switch where the bulb can be lit by either $0$ or $1$, but not both,
i.e.\ {\em exclusive} ``or'' causation.

Configuration structures have associated notions of causal orderings on events and concurrency between events:
\begin{defn}
Let $\mc C = (C,\lab) \in \Cstable$, and let $X \in C$.
\begin{itemize}
\item
Causality: $d \leq_X e$ iff for all $Y \in C$ with $Y \subseteq X$ we have $e \in Y$ implies $d \in Y$.
Furthermore $d <_X e$ iff $d \leq_X e$ and $d \neq e$.
\item
Concurrency: $d \co_X e $ iff $d \not <_X e$ and $e \not <_X d$.
\end{itemize}
\end{defn}

It is shown in~\cite{vGG01} that $<_X$ is a partial order and that the sub-configurations of $X$ are precisely those subsets $Y$ which are left-closed w.r.t.\ $<_X$, i.e.\ if $d <_X e \in Y$ then $d \in Y$.
Furthermore, if $X,Y \in C$ with $Y \subseteq X$, then ${<_Y} = {<_X \res Y}$.

\begin{rem}
{\em Prime} event structures form a proper subclass of stable event structures.
We do not give the definition here.
However we remark that the configuration structures associated with prime event structures are got by strengthening 
the ``closed under bounded intersections'' condition of Definition~\ref{def:cs} to closure under intersections:
if $X,Y \in C$ then $X \inter Y \in C$~\cite{vG96}.
In Figure~\ref{config2}, $\mc D$ is not prime, since it is not closed under intersections:
$\{0,b\} \inter \{1,b\} = \{b\} \notin C_{\mc D}$.
Thus prime event structures do not allow ``or'' causation;
to model the switch as a prime event structure we would have to model the lighting of the bulb as two separate events, one caused by $0$ and the other by $1$.
\end{rem}

We now define various notions of equivalence between configuration structures.
We start by defining the most basic labelled transition relation, on single events.  We also use a reverse transition relation, with a wavy arrow, which simply inverts the standard forward version.

\begin{defn}
Let $\mc C = (C,\lab) \in \Cstable$ and let $a \in \Act$.
We let $X \tranc a X'$ iff $X,X' \in C$, $X \subseteq X'$ and $X' \setminus X = \{e\}$ with $\lab(e) = a$.
Also $X \rtranc a X'$ iff $X' \tranc a X$.
\end{defn}

\begin{defn}[\cite{vGG01}]
Let $\mc C, \mc D \in \Cstable$.
A relation $R \subseteq C_{\mc C} \cross C_{\mc D}$ is an {\em interleaving bisimulation} (IB) between $\mc C$ and $\mc D$ if $(\emptyset,\emptyset) \in R$ and if $(X,Y) \in R$ then for $a \in \Act$
\begin{itemize}
\item
if $X \tranc a X'$ then $\exists Y'.\ Y \trand a Y'$ and $(X',Y') \in R$;
\item
if $Y \trand a Y'$ then $\exists X'.\ X \tranc a X'$ and $(X',Y') \in R$.
\end{itemize}
We say that $\mc C$ and $\mc D$ are IB equivalent ($\mc C \ibeqt \mc D$) iff there is an IB between $\mc C$ and $\mc D$.
\end{defn}
For a set of events $E$, let $\lab(E)$ be the multiset of labels of events in $E$.
We define a {\em step} transition relation where concurrent events are executed in a single step:
\begin{defn}
Let $\mc C = (C,\lab) \in \Cstable$ and let $A \in \Nat^\Act$ ($A$ is a multiset over $\Act$).
We let $X \tranc A X'$ iff $X,X' \in C$, $X \subseteq X'$, and $X' \setminus X = E$ with $d \co_{X'} e$ for all $d,e \in E$ and $\lab(E) = A$.
\end{defn}

\begin{defn}[\cite{Pom86,vGG01}]
Let $\mc C, \mc D \in \Cstable$.
A relation $R \subseteq C_{\mc C} \cross C_{\mc D}$ is a {\em step bisimulation} (SB) between $\mc C$ and $\mc D$ if $(\emptyset,\emptyset) \in R$ and if $(X,Y) \in R$ then for $A \in \Nat^\Act$
\begin{itemize}
\item
if $X \tranc A X'$ then $\exists Y'.\ Y \trand A Y'$ and $(X',Y') \in R$;
\item
if $Y \trand A Y'$ then $\exists X'.\ X \tranc A X'$ and $(X',Y') \in R$.
\end{itemize}
We say that $\mc C$ and $\mc D$ are SB equivalent ($\mc C \sbeqt \mc D$) iff there is an SB between $\mc C$ and $\mc D$.
\end{defn}

Hereditary history-preserving bisimulation was defined in~\cite{Bed91}, where it is called hereditary {\em strong} history-preserving (HH) bisimulation.
\begin{defn}\label{def:hh}
Let $\mc C, \mc D \in \Cstable$.
A relation $R \subseteq C_{\mc C} \cross C_{\mc D} \cross \pow(E_{\mc C} \cross E_{\mc D})$ is a {\em hereditary history-preserving (HH) bisimulation} between $\mc C$ and $\mc D$ if $(\emptyset,\emptyset,\emptyset) \in R$ and if $(X,Y,f) \in R$ and $a \in \Act$
\begin{itemize}
\item
$f$ is an isomorphism between $(X, <_X,\lab_{\mc C}\res X)$ and $(Y, <_Y,\lab_{\mc D}\res X)$;
\item
if $X \tranc a X'$ then $\exists Y',f'.\ Y \trand a Y'$, $(X',Y',f') \in R$ and $f'\res X = f$;
\item
if $Y \trand a Y'$ then $\exists X',f'.\ X \tranc a X'$, $(X',Y',f') \in R$ and $f'\res X = f$;
\item
if $X \rtranc a X'$ then $\exists Y',f'.\ Y \rtrand a Y'$, $(X',Y',f') \in R$ and $f\res X' = f'$.
\end{itemize}
We say that $\mc C$ and $\mc D$ are HH equivalent ($\mc C \hheqt \mc D$) iff there is an HH bisimulation between $\mc C$ and $\mc D$.
\end{defn}
Note that we do not need a clause for $Y \rtranc a Y'$ in Definition~\ref{def:hh}, since it is entailed by the clause for $X \rtranc a X'$ (for a given $Y'$, we have that $X',f'$ are fully determined, given $f\res X' = f'$ and $f'(X') = Y'$).

\begin{prop}[\cite{vGG01}]\label{prop:hh-sb-ib}
On stable configuration structures,
${\hheqt} \subsetneq {\sbeqt} \subsetneq {\ibeqt}$.
\qed
\end{prop}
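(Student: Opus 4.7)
The plan is to prove two inclusions and two strictness results. For ${\sbeqt} \subseteq {\ibeqt}$, I would observe that a single-event transition $X \tranc{a} X'$ coincides with the step transition $X \tranc{A} X'$ for the singleton multiset $A$ containing just $a$, since a one-element set is trivially pairwise concurrent. Hence any SB is automatically an IB. Strictness is witnessed by the standard pair $a \Par b$ versus $a.b + b.a$: both have the same interleavings, but only the parallel process admits the step $\emptyset \tranc{\{a,b\}} \{e_a,e_b\}$.

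For ${\hheqt} \subseteq {\sbeqt}$, given an HH bisimulation $R$ between $\mc C$ and $\mc D$, I would let $S = \{(X,Y) : \exists f.\ (X,Y,f) \in R\}$ and verify the SB clauses. Suppose $(X,Y) \in S$ with witness $f$ and $X \tranc{A} X'$ with $X' \setminus X = E$ pairwise concurrent in $X'$. Pick any enumeration $e_1,\dots,e_n$ of $E$. Pairwise concurrency ensures that each prefix $X \cup \{e_1,\dots,e_k\}$ is left-closed under $<_{X'}$ and hence a sub-configuration of $X'$, yielding a chain of single-event transitions $X = X_0 \tranc{\lab(e_1)} X_1 \tranc{} \cdots \tranc{\lab(e_n)} X_n = X'$. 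Repeatedly applying the forward HH clause produces a matching chain $Y = Y_0 \trand{\lab(e_1)} \cdots \trand{\lab(e_n)} Y_n$ and triples $(X_i,Y_i,f_i) \in R$ with each $f_i$ extending $f$. The crucial step is that $f_n$ is a label- and order-isomorphism between $(X',<_{X'})$ and $(Y_n,<_{Y_n})$, so $f_n(E) = Y_n \setminus Y$ is pairwise concurrent in $Y_n$ and carries the same label-multiset $A$; therefore $Y \trand{A} Y_n$ is a legitimate step matching $X \tranc{A} X'$, and $(X',Y_n) \in S$. The symmetric clause is handled identically.

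Strictness ${\hheqt} \subsetneq {\sbeqt}$ is witnessed by the failure of the absorption law already flagged in the introduction: the two sides of
$$(a \Par (b+c)) + (a \Par b) + ((a+c) \Par b) \;=\; (a \Par (b+c)) + ((a+c) \Par b)$$
are SB equivalent but separated under HH, as shown in~\cite{vGG01}; the separating play traces the $a$ in the $(a \Par b)$ summand on the left, whose causal future admits $b$ alone, and notes that no summand on the right isomorphically matches this once reverse moves are forced by the hereditary clause. The main obstacle is the HH-to-SB inclusion, where one must exploit that order-isomorphisms preserve concurrency to rebuild a step transition out of a sequence of matched single-event transitions; the remaining parts are routine.
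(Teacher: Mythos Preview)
Your argument is correct; the inclusions are established by exactly the standard reasoning you give, and the key step---that the order-isomorphism $f_n$ carried by the HH bisimulation transports the pairwise-concurrent set $E$ to a pairwise-concurrent set $f_n(E)$ with the same label multiset---is precisely what is needed to rebuild a step transition on the $\mc D$ side.

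Two minor remarks on comparison with the paper. First, the paper does not actually prove this proposition: it is stated with a citation to~\cite{vGG01} and closed with a \textsf{qed} symbol, so there is no proof in the paper to compare against beyond the subsequent Example~\ref{ex:hh-sb-ib}. Second, the paper's witnesses for strictness differ from yours: for ${\sbeqt} \subsetneq {\ibeqt}$ the paper uses $a \Par a$ versus $a.a$ (auto-concurrency) rather than your $a \Par b$ versus $a.b + b.a$; and for ${\hheqt} \subsetneq {\sbeqt}$ the paper uses $a \Par a$ versus $(a \Par a) + a.a$ rather than the absorption law. All four examples are valid separations; your choice of the absorption law is perfectly fine, though your justification that the $(a \Par b)$ summand ``admits $b$ alone'' is slightly imprecise---the actual argument (given in the paper's introduction for RB, which is coarser than HH) is that after matching $a$ then $b$ from the middle summand on the left against the $(a+c) \Par b$ summand on the right, the right can reverse $a$ and then do $c$, which the left cannot match.
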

We give two examples to show that the inclusions in Proposition~\ref{prop:hh-sb-ib} are proper.
Here and subsequently we use a CCS-like notation to refer to simple configuration structures.
\begin{example}\label{ex:hh-sb-ib}
\begin{enumerate}
\item
IB equivalence is insensitive to auto-concurrency:
$a \Par a = a.a$ holds for $\ibeqt$, but not for $\sbeqt$.
\item
$a \Par a = (a \Par a) + a.a$ holds for $\sbeqt$, but not for $\hheqt$.
\end{enumerate}
\end{example}
We now define an enhancement of IB with reverse transitions.  This was defined in~\cite{Bed91}, where it is called {\em back \& forth} bisimulation ($\sim_{b\& f}$).
It was called forward-reverse (FR) bisimulation in~\cite{PU07}.
\begin{defn}
Let $\mc C, \mc D \in \Cstable$.
A relation $R \subseteq C_{\mc C} \cross C_{\mc D}$ is a {\em reverse} bisimulation (RB) between $\mc C$ and $\mc D$ if
it is an IB and if $(X,Y) \in R$ then for $a \in \Act$
\begin{itemize}
\item
if $X \rtranc a X'$ then $\exists Y'.\ Y \rtrand a Y'$ and $(X',Y') \in R$;
\item
if $Y \rtrand a Y'$ then $\exists X'.\ X \rtranc a X'$ and $(X',Y') \in R$.
\end{itemize}
We say that $\mc C$ and $\mc D$ are RB equivalent ($\mc C \rbeqt \mc D$) iff there is an RB between $\mc C$ and $\mc D$.
\end{defn}
De Nicola, Montanari
and Vaandrager also investigated ``back \& forth'' bisimulations~\cite{DNMV90,DNV90},
but their relations were defined over computations (paths) rather than states
(for example, in the process $a \Par b$, after performing $a$ followed by $b$, one can only reverse immediately on $b$, and not $a$).
As a result,
in the absence of $\tau$ actions, 
the distinguishing power of these bisimulations is that of IB~\cite{DNMV90};
hence, it is lower than that of RB.

We next define SB with added reverse steps.  This was briefly mentioned in~\cite{Bed91}, where it is called {\em multi-step} back \& forth bisimulation ($\sim_{\mu b\& f}$).
\begin{defn}
Let $\mc C, \mc D \in \Cstable$.
A relation $R \subseteq C_{\mc C} \cross C_{\mc D}$ is a {\em reverse} SB (RSB) between $\mc C$ and $\mc D$ if
it is an SB and if $(X,Y) \in R$ then for $A \in \Nat^\Act$
\begin{itemize}
\item
if $X \rtranc A X'$ then $\exists Y'.\ Y \rtrand A Y'$ and $(X',Y') \in R$;
\item
if $Y \rtrand A Y'$ then $\exists X'.\ X \rtranc A X'$ and $(X',Y') \in R$.
\end{itemize}
We say that $\mc C$ and $\mc D$ are RSB equivalent ($\mc C \rsbeqt \mc D$) iff there is an RSB between $\mc C$ and $\mc D$.
\end{defn}

We give some further examples to show the differences between the various equivalences:
\begin{example}\label{ex:simple}
\begin{enumerate}
\item\label{ab+ba}
If $a \neq b$ then an interleaving law $a \Par b = a.b + b.a$ holds for $\ibeqt$, but not for $\rbeqt$ or $\sbeqt$.
\item\label{aa}
RB equivalence is insensitive to auto-concurrency:
$a \Par a = a.a$ holds for $\rbeqt$, but not for $\sbeqt$.
\item\label{abs}
The Absorption Law~\cite{vGG01}
$(a \Par (b + c)) + (a \Par b) + ((a + c) \Par b) =
(a \Par (b + c)) + ((a + c) \Par b)$
holds for $\sbeqt$, but not for $\rbeqt$.
\end{enumerate}
\end{example}

\begin{prop}\label{prop:inclusions}
On stable configuration structures,
\Comment{
\begin{enumerate}
\item\label{hh-rsb}
${\hheqt} \subseteq {\rsbeqt}$
\item\label{rsb-rb}
${\rsbeqt} \subsetneq {\rbeqt}$
\item\label{rsb-sb}
${\rsbeqt} \subsetneq {\sbeqt}$
\item\label{rb-ib}
${\rbeqt} \subsetneq {\ibeqt}$
\item\label{rb-sb}
${\rbeqt} \not\subseteq {\sbeqt}$
\item\label{sb-rb}
${\sbeqt} \not\subseteq {\rbeqt}$
\end{enumerate}
} 
\[\begin{array}{llcll}
1. & {\hheqt} \subseteq {\rsbeqt}
&\qquad&
4. & {\rbeqt} \subsetneq {\ibeqt}
\\
2. & {\rsbeqt} \subsetneq {\rbeqt}
&\qquad&
5. & {\rbeqt} \not\subseteq {\sbeqt}
\\
3. & {\rsbeqt} \subsetneq {\sbeqt}
&\qquad&
6. & {\sbeqt} \not\subseteq {\rbeqt}
\\
\end{array}
\]
\end{prop}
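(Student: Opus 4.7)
The plan is to handle the six clauses in two groups: clause~1 is the only nontrivial containment, requiring an actual construction; clauses~2--6 reduce to definition-chasing for the containments and to the already-listed examples for strictness and non-containment.

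For clause~1, I would show that an HH bisimulation $R$ (as triples $(X,Y,f)$) induces an RSB $R' = \{(X,Y) : \exists f.\, (X,Y,f) \in R\}$. Given $(X,Y,f) \in R$ and a forward step $X \tranc A X'$ with $X' \setminus X = E$ pairwise concurrent in $X'$, pick any enumeration $e_1,\ldots,e_n$ of $E$; concurrency guarantees that $X = X_0 \tranc{\lab(e_1)} X_1 \tranc{} \cdots \tranc{\lab(e_n)} X_n = X'$ is a legal sequence of single-event forward transitions. Apply the forward clause of HH bisimulation $n$ times, obtaining $Y = Y_0 \trand{} \cdots \trand{} Y_n = Y'$ with an extended isomorphism $f'$ on $X'$. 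Since the $e_i$ are pairwise concurrent in $X'$ and $f'$ is an order-isomorphism of $(X',<_{X'})$ onto $(Y',<_{Y'})$, the image events $f'(e_1),\ldots,f'(e_n)$ are pairwise concurrent in $Y'$. Hence $Y \trand A Y'$, matching the label multiset, and $(X',Y') \in R'$. The reverse step case is symmetric: linearize the removed set $E = X \setminus X'$ (again any order works by concurrency) and iterate the reverse clause of Definition~\ref{def:hh}; the restrictions of $f$ to the intermediate sub-configurations form a coherent descending chain of isomorphisms, and the images are pairwise concurrent in $Y$ by the same order-isomorphism argument, yielding a reverse step $Y \rtrand A Y'$.

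For the remaining clauses, the inclusions are immediate from the definitions: every RSB is an RB by restricting to singleton multisets (clause~2); every RSB is an SB by forgetting the reverse clauses (clause~3); every RB is an IB in the same way (clause~4). Strictness and non-containment are witnessed by examples already presented: clauses~2 and~5 both follow from Example~\ref{ex:simple}(\ref{aa}), since $a \Par a \rbeqt a.a$ but they are distinguished by the step $\{a,a\}$, hence also by RSB; clause~4 follows from Example~\ref{ex:simple}(\ref{ab+ba}); clauses~3 and~6 follow from Example~\ref{ex:simple}(\ref{abs}) together with the observation (made in the introduction) that $a \Par a$ and $(a \Par a) + a.a$ are SB equivalent but distinguished by RSB via a forward-then-reverse scenario that cannot be matched.

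The main obstacle is in clause~1: one must verify that the iterated application of the single-event HH clauses produces a $Y'$ whose newly added events are pairwise concurrent, so that the combined move qualifies as a \emph{step} in $\mc D$. This is where stability of the configuration structures and the fact that $f'$ preserves $<$ in both directions are essential --- without order reflection, concurrency in $X'$ would not transfer to $Y'$. Everything else is routine bookkeeping once the enumeration of the step is fixed.
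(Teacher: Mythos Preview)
Your proposal is correct and aligns with the paper's approach: the paper handles clause~1 by citation to \cite{Bed91} and \cite{vGG01} rather than by direct argument, whereas you spell out the standard construction (project the HH triples to pairs, linearize steps, and use the order-isomorphism to transfer concurrency), which is exactly what those references contain; for clauses~2--6 the paper likewise appeals to the definitions and to Example~\ref{ex:simple}, just as you do. Your additional invocation of the $a \Par a$ versus $(a \Par a) + a.a$ example for clause~3 is redundant --- the Absorption Law already shows $\sbeqt \not\subseteq \rbeqt \supseteq {\rsbeqt}$ --- but not incorrect.
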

\begin{proof}
(1) is shown in~\cite{Bed91}.
It is also essentially shown in~\cite{vGG01} (see Proposition~9.1 and the remarks after Definition 9.6).

The remaining parts follow from the definitions and Example~\ref{ex:simple}.
\Comment{
\begin{description}
\item
(\ref{hh-rsb})
This is shown in~\cite{Bed91}.
\item
(\ref{rsb-rb})
It is clear from the definitions that ${\rsbeqt} \subseteq {\rbeqt}$.
That ${\rbeqt} \not\subseteq {\rsbeqt}$ follows from 
Example~\ref{ex:simple}(\ref{aa}).
\item
(\ref{rsb-sb})
It is clear from the definitions that ${\rsbeqt} \subseteq {\sbeqt}$.
That ${\sbeqt} \not\subseteq {\rsbeqt}$ follows from 
Example~\ref{ex:simple}(\ref{abs}).
\item
(\ref{rb-ib})
It is clear from the definitions that ${\rbeqt} \subseteq {\ibeqt}$.
That ${\ibeqt} \not\subseteq {\rbeqt}$ follows from 
Example~\ref{ex:simple}(\ref{ab+ba}) (or (\ref{abs})).

\item
(\ref{rb-sb})
By Example~\ref{ex:simple}(\ref{aa}).

\item
(\ref{sb-rb})
By Example~\ref{ex:simple}(\ref{abs}).

\end{description}
} 
\end{proof}
As stated in the Introduction, it remains an open question whether ${\rsbeqt} = {\hheqt}$.
Figure~\ref{hierarchy2} illustrates Proposition~\ref{prop:inclusions}.
Inclusions are represented by arrows.
\begin{figure}
\psfrag{ib}{ib}
\psfrag{rib}{rb}
\psfrag{sb}{sb}
\psfrag{rsb}{rsb}
\psfrag{hh}{hh}
\centering
\includegraphics[width=0.9in]{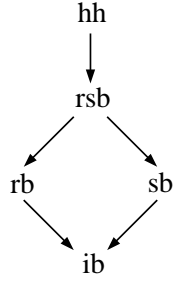}
\caption{Equivalences discussed in Section~\ref{sec:cs}.}\label{hierarchy2}
\end{figure}

\section{Characterisations of Reverse Step Bisimulation}\label{sec:chars}
We show that RSB equivalence can be characterised in three ways:
\begin{enumerate}
\item
as reverse homogeneous step bisimulation (RHSB) equivalence, where forward steps are not used, and reverse steps are homogeneous, i.e.\ all events have the same label
(Section~\ref{subsec:rhsb});
\item
as reverse depth-respecting bisimulation (RDB) equivalence, where events are matched on depth as well as on label
(Section~\ref{subsec:rdb}).
\item
as reverse homogeneous equidepth step bisimulation (RHESB) equivalence, which is the same as RHSB equivalence, with the additional proviso that reverse steps are equidepth, i.e.\ all events have the same depth
(Section~\ref{subsec:rdb} also).
\end{enumerate}
\subsection{Reverse Homogeneous Step Bisimulation}\label{subsec:rhsb}

As we already observed in the Introduction, if we allow forward step transitions, but only single reverse transitions, then we have a strictly weaker notion than RSB equivalence, which, although stronger than RB equivalence (since it distinguishes $a \Par a$ from $a.a$), is unable to distinguish $a \Par a$ from $(a \Par a) + a.a$.


Let us say that a set of events is {\em homogeneous} if all events have the same label.  Similarly, a multiset of labels is homogeneous if all labels are the same.  We next show that the power of SR bisimulation lies in the reverse steps, and in particular reverse homogeneous steps, so that forward steps are in fact superfluous.  
\begin{defn}\label{def:RHSB}
Let $\mc C,\mc D \in \Cstable$.
A relation $R \subseteq C_{\mc C}\times C_{\mc D}$
is a {\em reverse homogeneous step bisimulation (RHSB)}
between $\mc C$ and $\mc D$ if $R(\emptyset, \emptyset)$ and
whenever $R(X,Y)$ then
\begin{itemize}
\item
if $X \tranc a X'$ then $\exists Y'.\ Y \trand a Y'$ 
and $R(X',Y')$;
\item
if $Y \trand a Y'$ then $\exists X'.\ X \tranc a X'$ 
and $R(X',Y')$;
\item
if $X \rtranc A X'$, where $A$ is homogeneous, then $\exists Y'.\ Y \rtrand A Y'$ 
and $R(X',Y')$;
\item
if $Y \rtrand A Y'$, where $A$ is homogeneous, then $\exists X'.\ X \rtranc A X'$ 
and $R(X',Y')$.
\end{itemize}
We say that $\mc C$ and $\mc D$ are RHSB equivalent ($\mc C \rhsbeqt \mc D$) iff there is an RHSB between $\mc C$ and $\mc D$.
\end{defn}
Clearly, any RSB is an RHSB.  We shall show the converse, so that ${\rhsbeqt} = {\rsbeqt}$ (Theorem~\ref{thm:RHSB=RSB}).

First we need some lemmas.
\begin{lemma}\label{lem:labels}
Let $\mc C$ and $\mc D \in \Cstable$ and let $R$ be an RB between $\mc C$ and $\mc D$.
If $R(X,Y)$ then $\lab(X) = \lab(Y)$.
\end{lemma}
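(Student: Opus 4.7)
The plan is to prove the lemma by induction on $|X|$.

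For the base case $|X| = 0$, given $R(\emptyset, Y)$, I would first show that $Y = \emptyset$. If $Y$ were nonempty, connectedness of $\mc D$ would provide some $e \in Y$ with $Y \setminus \{e\} \in C_{\mc D}$, yielding a reverse transition $Y \rtrand{a} (Y \setminus \{e\})$ where $a = \lab(e)$. The defining condition of an RB would then demand a matching reverse transition from $X = \emptyset$, which is impossible. Hence $Y = \emptyset$ and $\lab(X) = \lab(Y)$ both equal the empty multiset.

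For the inductive step, assume the result for all pairs in $R$ of strictly smaller cardinality, and let $R(X,Y)$ with $|X| = n > 0$. By connectedness of $\mc C$, choose some $e \in X$ with $X \setminus \{e\} \in C_{\mc C}$, and let $a = \lab(e)$. Then $X \rtranc{a} (X \setminus \{e\})$, which by the RB matching condition is answered by some $Y \rtrand{a} Y'$ with $R(X \setminus \{e\}, Y')$. The inductive hypothesis gives $\lab(X \setminus \{e\}) = \lab(Y')$ as multisets, while $Y$ is obtained from $Y'$ by adjoining a single event carrying label $a$; appending one occurrence of $a$ to each multiset therefore yields $\lab(X) = \lab(Y)$.

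There is no substantive obstacle here: the argument is a routine induction whose only real ingredients are the connectedness axiom of stable configuration structures (guaranteeing at least one reversible event in any nonempty configuration) and the reverse matching clause of an RB (which transports the reversal to the other side while preserving the label). The base case is perhaps the only mildly subtle point, as it relies on the fact that no reverse transition is available from $\emptyset$, forcing the empty configuration to be matched only with itself.
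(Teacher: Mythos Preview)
Your proof is correct. The paper's argument uses the same ingredients (connectedness and the reverse-matching clause of an RB) but packages them differently: rather than inducting, it reverses $X$ all the way down to $\emptyset$ in one sequence $X \rtranc{a_1}\cdots\rtranc{a_n}\emptyset$, matches this on the $\mc D$ side to obtain the multiset inclusion $\lab(X)\subseteq\lab(Y)$, and then invokes symmetry to get the opposite inclusion. Your induction establishes equality directly at each step, at the cost of needing the base case that $R(\emptyset,Y)$ forces $Y=\emptyset$; the paper's symmetry trick sidesteps that. Both arguments are equally elementary, and neither has an advantage beyond taste.
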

\begin{proof}
Suppose that $R(X,Y)$ and suppose that $\lab(X) = A$.
By connectedness, there are transitions $X \rtranc {a_1} \cdots \rtranc {a_n} \emptyset$ with $A = \{a_1,\ldots,a_n\}$.
Therefore $Y \rtrand {a_1} \cdots \rtrand {a_n} Y'$, for some $Y'$.
Hence $\lab(X) \subseteq \lab(Y)$.
Symmetrically, $\lab(Y) \subseteq \lab(X)$.
\end{proof}
Note that Lemma~\ref{lem:labels} would not hold for IBs: for instance, there is an IB between $a+b$ and itself which, while necessarily including the identity relation on configurations, also relates the configuration resulting after performing (the event labelled) $a$ with that after performing $b$.

For any configuration $X$, 
let $\mn X$ denote its set of minimal elements (w.r.t.\ $<_X$);
$\mn X$ is, of course, also a configuration.
Note that $X,Y$ are configurations and $X \subseteq Y$ then $\mn X \subseteq \mn Y$.

\begin{lemma}\label{lem:rhsb min}
Let $\mc C,\mc D \in \Cstable$ be related by RHSB $R$.
If $R(X,Y)$ then we have $R(\mn X,\mn {Y})$. 
\end{lemma}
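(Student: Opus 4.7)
The plan is to reverse $X$ down to $\mn X$ via a sequence of single-event reverse transitions, use RHSB to obtain a matching sequence on the $Y$ side ending at some $Y^*$ with $R(\mn X, Y^*)$, and then argue that $Y^* = \mn Y$.

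For the reverse sequence I will iteratively pick a maximal event of the current $X_i$ that is not in $\mn X$. Such an event always exists when $X_i \neq \mn X$, because $X_i \setminus \mn X$ is upward-closed under $<_{X_i}$ (any successor of a non-minimal event is itself non-minimal) and $X_i$ is finite. Reversing such an event preserves $\mn{X_{i+1}} = \mn X$, so after $|X \setminus \mn X|$ single reverses we reach $\mn X$, and RHSB produces a matching sequence on the $Y$ side terminating at some sub-configuration $Y^* \subseteq Y$ with $R(\mn X, Y^*)$.

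The key auxiliary claim I will need is: if $R'$ is any RHSB and $R'(U, V)$ with $U = \mn U$, then $V = \mn V$. To prove it, let $U_a$ denote the set of events of $U$ with label $a$ and set $k = |U_a|$. Since $U = \mn U$, its events are pairwise concurrent and each is maximal in $U$, so $U \rtranc{a^k} U \setminus U_a$ is a valid homogeneous reverse step. RHSB forces a match $V \rtrand{a^k} V'$, which requires $V$ to contain $k$ pairwise concurrent maximal events with label $a$. Lemma~\ref{lem:labels} gives $V$ exactly $k$ events with label $a$ in total, so every $a$-labelled event of $V$ must be maximal in $V$. Ranging over all labels shows that every event of $V$ is maximal, ruling out any causal pair $d <_V e$ and therefore forcing $V = \mn V$.

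Applying this claim to $R(\mn X, Y^*)$ gives $Y^* = \mn{Y^*}$; since $Y^*$ is a sub-configuration of $Y$ and hence left-closed under $<_Y$, any event minimal in $Y^*$ is also minimal in $Y$, so $Y^* \subseteq \mn Y$. Running the same construction symmetrically, starting from $Y$ and reversing down to $\mn Y$, yields an $X^{**} \subseteq \mn X$ with $|X^{**}| = |\mn Y|$; combining $|\mn X| = |Y^*| \leq |\mn Y|$ with $|\mn Y| = |X^{**}| \leq |\mn X|$ forces equality, whence $Y^* = \mn Y$ and thus $R(\mn X, \mn Y)$. The main obstacle is the auxiliary claim: it is exactly here that one needs homogeneous reverse steps of maximum width, rather than just single reverses, in order to simultaneously pin down every $a$-labelled event of $V$ as being maximal in $V$.
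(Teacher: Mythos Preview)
Your proof is correct and follows the same overall strategy as the paper: reverse $X$ down to $\mn X$ by single steps, match to some $Y^*$ with $R(\mn X,Y^*)$, use homogeneous reverse steps together with Lemma~\ref{lem:labels} to show $Y^* \subseteq \mn Y$, and finish by symmetry plus a cardinality (equivalently, label-multiset) count. The one point of divergence is how the homogeneous step is deployed. The paper, from $\mn X$, first reverses all non-$a$ events singly and then performs the full $a$-step down to $\emptyset$, concluding that every $a$-event of $Y'$ is \emph{minimal}. You instead perform the $a^k$-step directly from $\mn X$ and conclude that every $a$-event of $Y^*$ is \emph{maximal} (hence there are no causal pairs, hence all events are minimal). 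Your variant is slightly more economical since it avoids the detour to $\emptyset$, but the two arguments are otherwise interchangeable.
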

\begin{proof}
Suppose $R(X,Y)$.  Then there are $a_1,\ldots,a_n$ such that
$X \rtran{a_1} \cdots \rtran{a_n} \mn X$.
Let $Y'$ be such that $Y \rtran{a_1} \cdots \rtran{a_n} Y'$ and $R(\mn X,Y')$.
Then $\lab(\mn X) = \lab(Y')$ by Lemma~\ref{lem:labels}.  We show that $Y' = \mn {Y}$.

Now we use the reverse homogeneous steps.
Let $A = \lab(\mn X)$.  Take any $a \in A$, and let $A_a$ be the multiset of $a$s in $A$.
Then $\mn X \rtran {b_1} \cdots \rtran {b_n} \rtran {A_a} \emptyset$ for some $b_1,\ldots,b_n \neq a$.
Hence $Y' \rtran {b_1} \cdots \rtran {b_n} \rtran {A_a} \emptyset$.
This tells us that all events labelled with $a$ in $Y'$ are minimal.
Hence all events in $Y'$ are minimal, since $a$ was arbitrary.  So $Y' \subseteq \mn{Y}$.
Hence $\lab(\mn X) \subseteq \lab(\mn {Y})$.
Symmetrically we can establish $\lab(\mn {Y}) \subseteq \lab(\mn X)$.
So $\lab(\mn {Y}) = \lab(\mn X)$.
Hence $Y' = \mn{Y}$ and $R(\mn X,\mn {Y})$ as required.
\end{proof}

We now define the ``lifting'' of a configuration structure with respect to a configuration $M$:

\begin{defn}\label{def:cs lift}
Let $\mc C = (C,\lab) \in \Cstable$ and let $M \in C$.
Define $\mc C_M = (C_M,\lab_M)$ where
$C_M = \{X \setminus M : M \subseteq X \in C, \mn{X} = \mn M \}$ and
$\lab_M = \lab \res \Union_{Y \in C_M} Y$.
\end{defn}

\begin{lemma}\label{lem:cs lift}
Let $\mc C = (C,\lab) \in \Cstable$ and let $M \in C$.
Then $\mc C_M \in \Cstable$.
\qed
\end{lemma}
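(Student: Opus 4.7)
The plan is to verify the four defining properties of a stable configuration structure in turn: rootedness, connectedness, closure under bounded unions, and closure under bounded intersections. Throughout I shall exploit the following bookkeeping: if $Y \in C_M$ then $Y = X\setminus M$ for some $X \in C$ with $M \subseteq X$ and $\mn{X} = \mn M$, and conversely every such $X$ yields a member of $C_M$. Moreover, since $M \subseteq X$ and both lie in $C$, the remark in the excerpt gives ${<_M} = {<_X}\res M$, so an event of $X$ that is minimal in $\mn M$ is also minimal in $X$.

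Rootedness is immediate: taking $X = M$ gives $\emptyset = M\setminus M \in C_M$. For connectedness, let $\emptyset \neq Y = X\setminus M \in C_M$. I pick $e$ to be a $<_X$-maximal element of the non-empty set $X\setminus M$. I claim $e$ is actually maximal in $X$: if some $e' \in X$ satisfied $e <_X e'$ then by maximality of $e$ in $X\setminus M$ we would have $e' \in M$, but $M$ is left-closed in $X$, forcing $e \in M$, a contradiction. Hence $X\setminus\{e\} \in C$. Next, since $e \in X\setminus M$ and $\mn X = \mn M \subseteq M$, we have $e \notin \mn X$; combined with $e$ being $<_X$-maximal, $\mn{X\setminus\{e\}} = \mn X = \mn M$. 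Thus $Y\setminus\{e\} = (X\setminus\{e\})\setminus M \in C_M$.

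For the two closure properties, suppose $Y_1,Y_2,Y_3 \in C_M$ with $Y_1\union Y_2 \subseteq Y_3$, and set $X_i = Y_i \union M \in C$. Then $M \subseteq X_i$, $\mn{X_i} = \mn M$, and $X_1 \union X_2 \subseteq X_3$, so bounded union and intersection in $\mc C$ give $X_1 \union X_2, X_1 \inter X_2 \in C$. It remains to show $\mn{X_1 \union X_2} = \mn{X_1 \inter X_2} = \mn M$, for then $Y_1\union Y_2 = (X_1 \union X_2)\setminus M$ and $Y_1 \inter Y_2 = (X_1 \inter X_2)\setminus M$ both lie in $C_M$ (noting $Y_i \inter M = \emptyset$). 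The $(\supseteq)$ inclusions are easy: any $e \in \mn M = \mn{X_3}$ cannot have a $<_{X_3}$-predecessor anywhere in $X_3$, so none in $X_1 \union X_2$ or in $X_1 \inter X_2$ either.

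The $(\subseteq)$ direction for unions is routine: any $e$ minimal in $X_1\union X_2$ lies in some $X_i$ and has no $<_{X_i}$-predecessor there, so $e \in \mn{X_i} = \mn M$. The step I expect to be the main obstacle is the analogous direction for intersections: given $e \in \mn{X_1 \inter X_2}$ (say $e \in X_1$), I need to show $e \in \mn{X_1}$. The key observation is that since $X_2 \in C$ with $X_2 \subseteq X_3$, $X_2$ is left-closed in $X_3$ w.r.t.\ $<_{X_3}$ by the characterisation of sub-configurations. Hence any $f <_{X_3} e$ with $f \in X_1$ automatically satisfies $f \in X_2$ (because $e \in X_2$), so $f \in X_1 \inter X_2$; minimality of $e$ in $X_1 \inter X_2$ therefore forces no such $f$ to exist, i.e.\ $e \in \mn{X_1} = \mn M$. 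This completes the verification and establishes $\mc C_M \in \Cstable$.
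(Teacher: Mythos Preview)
Your argument is correct. The paper itself omits the proof entirely (the lemma is stated with a bare \qed), so there is no published argument to compare against; your direct verification of rootedness, connectedness, and closure under bounded unions and intersections is exactly the routine check the authors evidently had in mind, and the only point requiring any care---that $\mn{X_1\cap X_2}\subseteq\mn M$---is handled cleanly by your left-closedness observation.
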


\begin{lemma}\label{lem:rhsb lift}
Let $\mc C,\mc D \in \Cstable$ be related by RHSB $R$.
Let $M \in C_{\mc C}$ be such that $\mn M = M$.
Similarly, let $N \in C_{\mc D}$ be such that $\mn N = N$.
Suppose also that $R(M,N)$.
Define $R_{M,N}$ by
\[
R_{M,N} =
\{(X \setminus M,Y \setminus N) :
R(X,Y),\ \mn X = M,\ \mn Y = N\}
\]
Then $R_{M,N}$ is an RHSB between $\mc C_M$ and $\mc D_N$.
\end{lemma}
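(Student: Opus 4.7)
The plan is to verify directly that $R_{M,N}$ satisfies each clause of Definition~\ref{def:RHSB}. The base case $R_{M,N}(\emptyset,\emptyset)$ is witnessed by $(M,N)$ itself, since $R(M,N)$ is assumed, $\mn M = M$, $\mn N = N$, and $M\setminus M = N\setminus N = \emptyset$. For the four transition clauses, I fix $(X\setminus M, Y\setminus N)\in R_{M,N}$ with witnesses $X,Y$ satisfying $R(X,Y)$, $\mn X = M$, $\mn Y = N$. Given a transition in $\mc C_M$ (or $\mc D_N$), the strategy is to translate it back up to $\mc C$ (or $\mc D$), apply the RHSB property of $R$, and push the matched transition back down. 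The crux will be to show that the matched target configuration in $\mc D$ has minimal set \emph{exactly} $N$, so that it actually lives in $\mc D_N$ (and symmetrically on the $\mc C$ side); this is the main obstacle.

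For the forward single-event clause, unfolding $X\setminus M \tranc a X^*$ gives $X\tranc a X'$ in $\mc C$ with $\mn{X'}=M$ and $X^* = X'\setminus M$. Applying the RHSB property of $R$ yields $Y'$ with $Y\trand a Y'$ and $R(X',Y')$, say $Y' = Y\cup\{e'\}$. Combining Lemma~\ref{lem:rhsb min} applied to $R(X',Y')$ with Lemma~\ref{lem:labels}, I obtain $\lab(\mn{Y'}) = \lab(M) = \lab(N)$, hence $|\mn{Y'}| = |N|$. To pin down $\mn{Y'}=N$, the plan is to argue that $e'$ cannot be $<_{Y'}$-below any event of $Y$: if $e'\leq_{Y'} e''$ for some $e''\in Y$, instantiating the definition of $\leq_{Y'}$ with the sub-configuration $Z = Y$ of $Y'$ would force $e'\in Y$, a contradiction. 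Consequently every element of $N = \mn Y$ remains minimal in $Y'$, so $N\subseteq\mn{Y'}$, and the cardinality equality finishes the case.

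For the reverse homogeneous step clause, the analogous unfolding gives $X\rtranc A X''$ in $\mc C$ with $\mn{X''}=M$, and RHSB yields $Y''$ with $Y\rtrand A Y''$ and $R(X'',Y'')$; once again $|\mn{Y''}|=|N|$ by Lemmas~\ref{lem:rhsb min} and~\ref{lem:labels}. Here the key observation is that $E' := Y\setminus Y''$ is \emph{upward-closed} in $Y$ with respect to $<_Y$ --- this follows from $Y''$ being left-closed under $<_Y$, which is forced by $Y''$ being a sub-configuration of $Y$. It follows that no $g\in\mn{Y''}$ can have a $<_Y$-predecessor in $E'$ (else $g\in E'$ by upward-closure, contradicting $g\in Y''$), so $g$ is already minimal in $Y$. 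Hence $\mn{Y''}\subseteq N$, and cardinality gives $\mn{Y''}=N$. The two symmetric clauses (transitions originating from the $\mc D$-side) go through by the same arguments with the roles of $\mc C$ and $\mc D$ exchanged; in every case the combination of the structural observation (sub-configuration characterisation of $\leq_{Y'}$, or upward-closedness of the removed events) with the label-preservation supplied by Lemmas~\ref{lem:rhsb min} and~\ref{lem:labels} is exactly what pins down the minimal set.
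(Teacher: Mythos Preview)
Your proof is correct and follows essentially the same approach as the paper: in each transition clause you lift to $\mc C$/$\mc D$, invoke the RHSB property of $R$, and then use Lemmas~\ref{lem:rhsb min} and~\ref{lem:labels} to establish that the matched target has the right minimal set. The only cosmetic difference is that where the paper simply cites the general fact (noted just before Lemma~\ref{lem:rhsb min}) that $X\subseteq Y$ configurations implies $\mn X\subseteq \mn Y$ to obtain $N\subseteq\mn{Y'}$ (forward case) and $\mn{Y''}\subseteq N$ (reverse case), you re-derive these inclusions from first principles via the left-closedness of sub-configurations; but the substance is identical.
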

\begin{proof}
Write $R_{M,N}$ as $R'$ for short.
Certainly $R'(\emptyset,\emptyset)$, since $R(M,N)$.
Suppose $R'(X \setminus M, Y \setminus N)$ with $R(X,Y)$, $\mn {X} = M$, $\mn {Y} = N$.

Forwards: Suppose $X \setminus M \tran a_{\mc C_M} X' \setminus M$,
where $X' \in C_{\mc C}$ and $\mn {X'} = M$.
Then $X \tranc a X'$.
Hence there is $Y'$ such that $Y \trand a Y'$ and $R(X',Y')$.
We need to know that $\mn{Y'} = N$.
Using Lemma~\ref{lem:rhsb min} we have $R(M,\mn {Y'})$.
By Lemma~\ref{lem:labels}, since $R(M,N)$ we have $\lab(M) = \lab(N)$.
Also since $R(M,\mn {Y'})$ we have $\lab(M) = \lab(\mn {Y'})$.
So $\lab(N) = \lab(\mn {Y'})$.
Since $N =\mn Y \subseteq \mn {Y'}$ we deduce
$\mn {Y'} = N$.
Now $Y \setminus N \tran a_{\mc D_N} Y' \setminus N$ and $R'(X' \setminus M,Y' \setminus N)$ as required.

Reverse: Suppose $X \setminus M \rtran A_{\mc C_M} X' \setminus M$,
where $X' \in C_{\mc C}$ and $\mn {X'} = M$.
Then $X \rtranc A X'$.
Hence there is $Y'$ such that $Y \rtrand A Y'$ and $R(X',Y')$.
We need to know that $\mn{Y'} = N$.
By Lemma~\ref{lem:rhsb min} we have $R(M,\mn{Y'})$.
So by Lemma~\ref{lem:labels}, $\lab(\mn{Y'}) = \lab(M) = \lab(N)$.
Since $Y' \subseteq Y$, $\mn{Y'} \subseteq \mn{Y} = N$.
Hence $\mn{Y'} = N$.
Now $Y \setminus N \rtran A_{\mc D_{N}} Y' \setminus N$ and $R'(X' \setminus M,Y' \setminus N)$ as required.
\end{proof}

\begin{lemma}\label{lem:fwd step}
Let $\mc C,\mc D \in \Cstable$ be related by RHSB $R$.
Suppose $R(X,Y)$ and $X \tranc A X'$.
Then there is $Y'$ such that $Y \trand A Y'$ and $R(X',Y')$.
\end{lemma}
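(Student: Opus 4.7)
The plan is to proceed by induction on $|X|$, using Lemmas~\ref{lem:labels}, \ref{lem:rhsb min}, and~\ref{lem:rhsb lift} as the main tools.

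For the base case $X = \emptyset$, Lemma~\ref{lem:labels} forces $Y = \emptyset$. Enumerate $E = X'$ as $e_1,\ldots,e_n$; since the events of $E$ are pairwise concurrent in $X'$, the sequence $\emptyset \tranc{\lab(e_1)} \cdots \tranc{\lab(e_n)} X'$ consists of valid single-event transitions, and repeatedly applying the forward single-event clause of RHSB produces some $Y'$ with $R(X',Y')$ reached by matching single events on the $\mc D$ side. Since $\mn{X'} = X'$, Lemma~\ref{lem:rhsb min} gives $R(X',\mn{Y'})$, and Lemma~\ref{lem:labels} then forces $|\mn{Y'}| = |X'| = |Y'|$, so $\mn{Y'} = Y'$ and all events of $Y'$ are pairwise concurrent; this makes $\emptyset \trand A Y'$ a valid step.

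For the inductive step with $X \neq \emptyset$, split $E = X' \setminus X$ into $E_{\min} = E \cap \mn{X'}$ (events of $E$ isolated in $X'$, since every event of $E$ is automatically maximal there) and $E_0 = E \setminus E_{\min}$, decomposing the transition as $X \tranc{A_{\min}} X \cup E_{\min} \tranc{A_0} X'$ with $A_{\min} = \lab(E_{\min})$ and $A_0 = \lab(E_0)$. The first sub-step adds only isolated events, so the same minimum-set counting argument as in the base case, now applied to $R(X,Y)$ together with $\mn{X \cup E_{\min}} = \mn X \cup E_{\min}$, yields $Y \trand{A_{\min}} Y_k$ with $R(X \cup E_{\min}, Y_k)$ and $F_k = Y_k \setminus Y \subseteq \mn{Y_k}$. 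For the second sub-step, set $M' = \mn{X'} = \mn{X \cup E_{\min}}$ and $N' = \mn{Y_k}$; Lemma~\ref{lem:rhsb min} gives $R(M',N')$, and Lemma~\ref{lem:rhsb lift} then provides a lifted RHSB $R_{M',N'}$ between $\mc C_{M'}$ and $\mc D_{N'}$ under which the lifted transition $X \setminus \mn X \tran{A_0}_{\mc C_{M'}} X' \setminus M'$ is available; since $|X \setminus \mn X| < |X|$ the inductive hypothesis applies and produces $Y^{**}$ with $R(X',Y^{**})$ and $Y_k \trand{A_0} Y^{**}$ back in $\mc D$.

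The remaining obstacle — and in my view the main subtlety — is showing that the composition $Y \trand{A_{\min}} Y_k \trand{A_0} Y^{**}$ is itself a single step $Y \trand A Y^{**}$, which reduces to pairwise concurrency of $F_k \cup F''$ in $Y^{**}$ where $F'' = Y^{**} \setminus Y_k$. Within-sub-step concurrency is immediate, and because $F_k \subseteq \mn{Y_k} = \mn{Y^{**}}$ no event of $F''$ can lie below any event of $F_k$; the delicate remaining direction is ruling out any $F_k$ event lying below an $F''$ event. I would argue this by exploiting that every event of $E_{\min}$ is \emph{maximal} as well as minimal in $X'$, so for each label $a$ the reverse homogeneous step $X' \rtranc{(A_{\min})_a} X' \setminus (E_{\min})_a$ is valid; its matched reverse on the $\mc D$ side removes a pairwise concurrent maximal set of $a$-events from $Y^{**}$, and applying Lemmas~\ref{lem:labels} and~\ref{lem:rhsb min} to the resulting related pair forces this set to lie in $\mn{Y^{**}}$ and hence to consist of isolated events. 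A label-by-label comparison of multiplicities in $\mn{Y^{**}} = \mn Y \cup F_k$ then identifies the $F_k$ events as isolated (hence maximal) in $Y^{**}$, yielding the required cross-concurrency and completing the induction.
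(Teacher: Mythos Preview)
Your overall architecture differs from the paper's: you induct on $|X|$ and process all of $E_{\min}$ first, then lift; the paper inducts on $|X'|$, peels off a single $e\in E_{\min}$ (when one exists), applies the induction hypothesis to the smaller target $X'\setminus\{e\}$, and only afterwards adds $e$ (and its match $e'$) back. That ordering is what makes the paper's cross-concurrency argument trivial: since $e'$ is added \emph{last}, left-closedness of the intermediate $Y''$ forces $e'\notin$ the causal past of anything already present, and the paper only needs to check that $e'$ is minimal to conclude concurrency with the rest of $Y'\setminus Y$.

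Your final step, by contrast, has a genuine gap. You need every $f\in F_k$ to be maximal in $Y^{**}$, and you argue this via the reverse homogeneous step $X'\rtranc{(A_{\min})_a} X'\setminus (E_{\min})_a$: its match removes a set $G_a\subseteq \mn{Y^{**}}$ of $|(E_{\min})_a|=|(F_k)_a|$ isolated $a$-events. But $\mn{Y^{**}}=\mn Y\cup F_k$, and nothing forces $G_a\subseteq F_k$; the matched isolated events could perfectly well come from $\mn Y$. Your ``label-by-label comparison of multiplicities'' would need $|(\mn X)_a^{\mathrm{iso}}|=|(\mn Y)_a^{\mathrm{iso}}|$ (isolated $a$-events of $X'$ lying in $\mn X$, versus isolated $a$-events of $Y^{**}$ lying in $\mn Y$) to isolate the $F_k$ contribution, and you have not established that --- nor is it clear how to, since whether an $x\in\mn X$ is isolated in $X'$ depends on whether some $E_0$ event sits above it, and there is no obvious transfer of this information to the $\mc D$ side. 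So as written, the argument does not rule out some $f\in F_k$ causing some $f'\in F''$.

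The fix is exactly the paper's trick: induct on $|X'|$, and in the case $E_{\min}\neq\emptyset$ remove one $e\in E_{\min}$ from the \emph{target} first, apply the induction hypothesis to $X\tranc{A\setminus\{a\}}X'\setminus\{e\}$, and then add $e$ back. The matched $e'$ is then the last event added, your minimality count shows $e'\in\mn{Y'}$, and left-closedness of $Y''$ in $Y'$ immediately gives the missing concurrency direction for free.
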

\begin{proof}
The proof is partially inspired by Fecher's proof that a weak history-preserving bisimulation is a step bisimulation~\cite{Fec04}.

We proceed by induction on $\card {X'}$.

Base case: $\card {X'} = 0$.  Then $Y' = \emptyset$ will do, trivially.

Induction step.  Notice that $\mn X \subseteq \mn {X'}$.
There are two cases:

(1)
There is $e \in \mn {X'} \setminus \mn X$.  Let $\lab(e) = a$.
Then $X \tran {A \setminus \{a\}} X' \setminus \{e\}$.
By induction there is $Y''$ such that $Y \tran {A \setminus \{a\}} Y''$ and $R(X' \setminus \{e\},Y'')$.
Now $X' \setminus \{e\} \tran a X'$.
So there is $Y'$ such that $Y'' \tran a Y'$ and $R(X',Y')$.
Let $e'$ be the single element in $Y' \setminus Y''$.
Now $\mn{X' \setminus \{e\}} = \mn {X'} \setminus \{e\}$ (and $e \in \mn {X'}$).
So $\card{\mn{X' \setminus \{e\}}} < \card{ \mn {X'}}$.
By Lemmas~\ref{lem:rhsb min} and~\ref{lem:labels}, $\lab(\mn{X' \setminus \{e\}}) = \lab(\mn{Y' \setminus \{e'\}})$
and $\lab(\mn {X'}) = \lab(\mn{Y'})$.
So $\card{\mn{Y' \setminus \{e'\}}} < \card{ \mn {Y'}}$
and $\mn{Y' \setminus \{e'\}} \subsetneq \mn {Y'}$.
It follows that 
$e' \in \mn{Y'}$.  Hence $e'$ is concurrent with all events in $Y' \setminus Y$, and
$Y \tran A Y'$ as required.

(2)
$\mn {X'} = \mn X$.
Let $M = \mn X$, $N = \mn Y$.
By Lemma~\ref{lem:rhsb min} we have $R(M,N)$.

Let configuration structures $\mc C_M$ and $\mc D_N$ be as in Definition~\ref{def:cs lift}.
Let $R'$ be the RHSB between $\mc C_M$ and $\mc D_N$ of Lemma~\ref{lem:rhsb lift}.

We have $R'(X \setminus M, Y \setminus N)$ and $X \setminus M \tran A_{\mc C_M} X' \setminus M$.
Clearly $M \neq \emptyset$, since $\card {X'} > 0$.
So by induction there is $Y'$ such that $Y \setminus N \tran A_{\mc D_N} Y' \setminus N$
with $R'(X' \setminus M, Y' \setminus N)$.  So $Y \tran A_{\mc D} Y'$ and $R(X',Y')$ as required.
\end{proof}

We use the same method as Lemma~\ref{lem:fwd step} to show:
\begin{lemma}\label{lem:rev step}
Let $\mc C,\mc D \in \Cstable$ be related by RHSB $R$.
Suppose $R(X,Y)$ and $X \rtranc A X'$.
Then there is $Y'$ such that $Y \rtrand A Y'$ and $R(X',Y')$.
\end{lemma}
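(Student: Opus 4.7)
My plan is to induct on $|X|$, closely mirroring the structure of Lemma~\ref{lem:fwd step}. The base case $|X| = 0$ forces $X' = \emptyset$ and $A$ empty, so $Y' = Y$ works trivially. For the induction step I case-split on whether $\mn X = \mn{X'}$, the reverse analogue of the forward proof's split. Note that $X' \subseteq X$ together with $<_{X'} = {<_X}\!\res X'$ gives $\mn{X'} \subseteq \mn X$, so this split amounts to asking whether $\mn X \setminus \mn{X'}$ is non-empty.

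Case 2 ($\mn X = \mn{X'}$) is handled exactly as in the forward proof. Set $M = \mn X = \mn{X'}$ and $N = \mn Y$; Lemma~\ref{lem:rhsb min} gives $R(M,N)$, and Lemma~\ref{lem:rhsb lift} supplies an RHSB $R_{M,N}$ between $\mc C_M$ and $\mc D_N$ with $R_{M,N}(X \setminus M,\,Y \setminus N)$. The reverse step lifts to $X \setminus M \rtran A_{\mc C_M} X' \setminus M$, and since $X \neq \emptyset$ we have $M \neq \emptyset$, hence $|X \setminus M| < |X|$. The induction hypothesis then yields $Y'$ with $Y \setminus N \rtran A_{\mc D_N} Y' \setminus N$ and $R_{M,N}(X' \setminus M,\,Y' \setminus N)$, which translates back to $Y \rtrand A Y'$ and $R(X',Y')$.

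Case 1 ($\mn X \setminus \mn{X'} \neq \emptyset$) is the delicate case and is where I expect the main obstacle. Pick $e \in \mn X \setminus \mn{X'}$; then $e \notin X'$, so $e \in E := X \setminus X'$, and $e$ is moreover maximal in $X$ since the events of a reverse step are always maximal in the source configuration. Setting $a = \lab(e)$, split the step as $X \rtranc a X \setminus \{e\} \rtranc {A \setminus \{a\}} X'$. The RHSB clause for the (trivially homogeneous) size-one reverse step yields $Y''$ with $Y \rtrand a Y''$ and $R(X \setminus \{e\},Y'')$, and the induction hypothesis applied to the remainder yields $Y'$ with $Y'' \rtrand {A \setminus \{a\}} Y'$ and $R(X',Y')$. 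The real work is to verify that the concatenation $Y \rtrand a Y'' \rtrand {A \setminus \{a\}} Y'$ is actually a single step $Y \rtrand A Y'$, i.e.\ that $\{e'\} \cup F$, where $\{e'\} = Y \setminus Y''$ and $F = Y'' \setminus Y'$, is pairwise concurrent and entirely maximal in $Y$. Applying Lemma~\ref{lem:rhsb min} to both $R(X,Y)$ and $R(X \setminus \{e\},Y'')$, combined with Lemma~\ref{lem:labels} and the identity $\mn{X \setminus \{e\}} = \mn X \setminus \{e\}$ (which uses $e$ maximal in $X$), yields $|\mn Y| = |\mn{Y''}| + 1$; since $\mn{Y''} \subseteq \mn Y$, this forces $e' \in \mn Y$. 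Combined with $e'$ being maximal in $Y$ (it was removed first), $e'$ is concurrent in $Y$ with every other event, in particular with every element of $F$; concurrency and maximality within $F$ in $Y$ are inherited from the step $Y'' \rtrand {A \setminus \{a\}} Y'$ together with $Y'' \subseteq Y$.
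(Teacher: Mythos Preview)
Your proof is correct and follows essentially the same approach as the paper: induction on $|X|$ with the same case split on whether $\mn X = \mn{X'}$, handling Case~2 via the lifting Lemmas~\ref{lem:rhsb min} and~\ref{lem:rhsb lift}, and in Case~1 reversing the single minimal--maximal event $e$ first and then applying induction to the remaining $A \setminus \{a\}$ reverse step. The paper's proof sketch explicitly singles out this ordering (single event first, then the rest) as the key point, and your justification that $e' \in \mn Y$ via the cardinality comparison $|\mn{Y''}| = |\mn Y| - 1$ is exactly the reverse analogue of the corresponding argument in Lemma~\ref{lem:fwd step}.
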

\begin{proof}
Much as the proof of Lemma~\ref{lem:fwd step}.
We proceed by induction on $\card X$, going into much the same two cases.

In (1) note that we start by reversing from $X$ in a single event transition using an element $e \in \mn X \setminus \mn {X'}$, and we then do an $A \setminus \{a\}$ reverse step.
If we did the $A \setminus \{a\}$ reverse step followed by the single $a$ reverse transition, then on the $\mc D$ side we leave open the possibility that $e'$ causes the remaining events of $Y' \union \{e'\} \tran {A \setminus \{a\}} Y$.

In (2) we define $R'$ in exactly the same way, and everything works much as before.
\end{proof}
Combining:

\begin{theorem}\label{thm:RHSB=RSB}
On stable configuration structures,
${\rhsbeqt} = {\rsbeqt}$\ .
\end{theorem}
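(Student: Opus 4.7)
My plan is to establish the equality by proving the two inclusions. The direction ${\rsbeqt} \subseteq {\rhsbeqt}$ should be immediate from the definitions: any RSB is automatically an RHSB, because forward single-event transitions are the special case of forward step transitions in which the multiset is a singleton, and reverse homogeneous steps are a special case of arbitrary reverse steps. So any RSB, read as a relation on configurations, satisfies all four RHSB clauses without modification.

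The substantive direction is ${\rhsbeqt} \subseteq {\rsbeqt}$. Here I would take an arbitrary RHSB $R$ between $\mc C$ and $\mc D$ and argue that $R$ itself is already an RSB. The definition of RHSB directly supplies the forward single-event clauses and the reverse homogeneous step clauses. What I still need for RSB are two additional clauses: forward multiset step transitions $X \tranc A X'$ must be matched by $\mc D$, and general (not necessarily homogeneous) reverse multiset step transitions $X \rtranc A X'$ must be matched. But these are exactly the conclusions of Lemmas~\ref{lem:fwd step} and~\ref{lem:rev step}, which have already been proved for arbitrary RHSBs.

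So the theorem reduces to a short assembly: invoke the definitional containment one way, and Lemmas~\ref{lem:fwd step} and~\ref{lem:rev step} the other way. The real work is not in this theorem at all but was done earlier, in the inductive proofs of those two lemmas, which proceed by induction on configuration size, splitting on whether or not $\mn{X'} \setminus \mn X$ is empty, and in the harder case $\mn{X'} = \mn X$ using the lifting construction $\mc C_M$ from Definition~\ref{def:cs lift} together with Lemma~\ref{lem:rhsb lift} to pass to a smaller instance. The only genuine subtlety I would watch for when writing this up is being explicit that RHSB quantifies forward transitions only over single labels $a$, so the forward-step clause of RSB is a strict strengthening that must be earned from Lemma~\ref{lem:fwd step}; but since that lemma is already in hand, the final combination is essentially a one-liner.
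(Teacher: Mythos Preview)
Your proposal is correct and matches the paper's proof essentially verbatim: one inclusion is immediate since every RSB is an RHSB, and the converse follows directly from Lemmas~\ref{lem:fwd step} and~\ref{lem:rev step}. Your additional commentary on where the real work lies (the inductive structure of those lemmas and the lifting construction) is accurate and aligns with the paper's development.
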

\begin{proof}
Clearly any RSB is an RHSB.
Any RHSB is an RSB, by Lemmas~\ref{lem:fwd step} and~\ref{lem:rev step}.
\end{proof}

\subsection{Depth-respecting Bisimulations}\label{subsec:rdb}
We introduce various new notions of equivalence, which take into account the {\em depth} of events within a configuration.
In particular we show that RSB equivalence can be characterised
as reverse depth-respecting bisimulation (RDB) equivalence, where events are matched on depth as well as on label,
and as reverse homogeneous equidepth step bisimulation (RHESB) equivalence,
which is a variant of RHSB equivalence in which all events in a reverse step have the same depth.

We start by defining depth:
\begin{defn}\label{def:depth}
Let $\mc C = (C,\lab) \in \Cstable$, and let $X \in C$, $e \in X$.
The {\em depth} of $e$ w.r.t.\ $X$ (and implicitly $\mc C$) is given by
\[
\depth X e \Defeq \left\{ \begin{array}{ll}
1 & \mbox{if } e \mbox{ is minimal in } X
\\
\max\{\depth X {e'} : e' <_X e\} +1 & \mbox{otherwise}
\end{array} \right.
\]
\end{defn}
The depth of an event $e$ is the length of the longest causal chain in $X$ up to and including $e$.
Clearly, if $e <_X e'$ then $\depth X e <_X \depth X {e'}$.
Note that if $X,Y \in C$ and $e \in X \inter Y$, then it is not necessarily the case that $\depth X e = \depth Y e$
(due to the fact that a single event can have different possible sets of causes).
However if $X \union Y \subseteq Z$ for some $Z \in C$ then $\depth X e = \depth Y e$.

\begin{defn}
Let $\mc C = (C,\lab) \in \Cstable$ and let $a \in \Act$, $ k \in \Nat$.
We let $X \dtranc a k X'$ iff $X,X' \in C$, $X \subseteq X'$ and $X' \setminus X = \{e\}$
with $\lab(e) = a$, $\depth {X'} e = k$.
Also $X \rdtranc a k X'$ iff $X' \dtranc a k X$.
\end{defn}

\begin{defn}
Let $\mc C, \mc D \in \Cstable$.
A relation $R \subseteq C_{\mc C} \cross C_{\mc D}$ is a {\em depth-respecting} bisimulation (DB) between $\mc C$ and $\mc D$ if $(\emptyset,\emptyset) \in R$ and
if $(X,Y) \in R$ then for $a \in \Act$ and $k \in \Nat$
\begin{itemize}
\item
if $X \dtranc a k X'$ then $\exists Y'.\ Y \dtrand a k Y'$ and $(X',Y') \in R$;
\item
if $Y \dtrand a k Y'$ then $\exists X'.\ X \dtranc a k X'$ and $(X',Y') \in R$.
\end{itemize}
We say that $\mc C$ and $\mc D$ are DB equivalent ($\mc C \dbeqt \mc D$) iff there is a DB between $\mc C$ and $\mc D$.
\end{defn}
\begin{example}\label{ex:sb-db}
On stable configuration structures,
\begin{enumerate}
\item\label{sb-db}
$a \Par b = (a \Par b) + a.b$ holds for $\sbeqt$, but not for $\dbeqt$;
\item\label{abs-db}
The Absorption Law (Example~\ref{ex:simple}(\ref{abs})) holds for $\dbeqt$. 
\end{enumerate}
\end{example}
\begin{prop}\label{prop:db-sb}
On stable configuration structures,
${\dbeqt} \subsetneq {\sbeqt}$.
\end{prop}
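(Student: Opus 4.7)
The plan is to handle the inclusion and the strictness separately. For strictness, Example~\ref{ex:sb-db}(\ref{sb-db}) is already a witness: $a\Par b$ and $(a\Par b)+a.b$ are SB-equivalent but cannot be DB-equivalent, since the summand $a.b$ produces a $b$-transition of depth $2$ that $a\Par b$, whose only $b$-event sits at depth $1$, cannot match. So the real work is to prove that any DB $R$ between $\mc C$ and $\mc D$ is automatically an SB. The IB (single-event) clause of SB is immediate by forgetting the depth annotation on DB transitions; what needs argument is the step clause.

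For the forward-step clause, suppose $R(X,Y)$ and $X \tranc A X'$ with $E=X'\setminus X$, a pairwise $<_{X'}$-concurrent set. I would order $E=\{e_1,\dots,e_n\}$ so that $k_i\Defeq\depth{X'}{e_i}$ is non-decreasing, and execute on the $\mc C$ side in the reverse of that order (deepest first), producing a chain $X=X_0\tranc{}\cdots\tranc{}X_n=X'$. Since each $e_i$ has all its $<_{X'}$-causes in $X$ (by concurrency of $E$), at the moment $e_{n-k+1}$ is added its depth equals $k_{n-k+1}$; because $X_k\subseteq X'$ and both are configurations, $<_{X'}\res X_k = {<_{X_k}}$, so this really is a $\dtranc{}{}$-transition. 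Applying $R$ step by step yields a matching chain $Y=Z_0\tranc{}\cdots\tranc{}Z_n=Y'$ whose added events $g_1,\dots,g_n$ satisfy $\lab(g_k)=\lab(e_{n-k+1})$, $\depth{Z_k}{g_k}=d_k\Defeq k_{n-k+1}$, and $R(X_k,Z_k)$; by construction $d_1\ge d_2\ge\cdots\ge d_n$.

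The main obstacle, and the crux of the proof, is to show that $\{g_1,\dots,g_n\}$ is pairwise $<_{Y'}$-concurrent, so that indeed $Y\trand A Y'$. I would argue by contradiction: if $g_i<_{Y'}g_j$, then since the causes of $g_j$ must already be in the configuration when $g_j$ is added, $g_i\in Z_{j-1}$ and hence $i<j$. Sub-configurations are left-closed, so every $<_{Y'}$-cause of $g_k$ lies in $Z_{k-1}\subseteq Z_k$; together with $<_{Y'}\res Z_k={<_{Z_k}}$ this yields $\depth{Y'}{g_k}=\depth{Z_k}{g_k}=d_k$ by induction on depth. Then $g_i<_{Y'}g_j$ forces $d_j>d_i$, i.e.\ $k_{n-j+1}>k_{n-i+1}$; but $i<j$ gives $n-i+1>n-j+1$ and hence $k_{n-i+1}\ge k_{n-j+1}$ by monotonicity, a contradiction. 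The reverse direction of the SB clause (matching a $\mc D$-step by a $\mc C$-step) is entirely symmetric, which completes the proof that $R$ is an SB and hence ${\dbeqt}\subseteq{\sbeqt}$.
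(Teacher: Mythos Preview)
Your proposal is correct and follows essentially the same approach as the paper: enumerate the events of the step in non-increasing order of depth (you phrase this as ``non-decreasing, then execute in reverse,'' which amounts to the same thing), replay them as single DB-transitions, and then argue pairwise concurrency on the $\mc D$ side by combining left-closedness (which forces $i<j$ whenever $g_i<_{Y'}g_j$) with the depth inequality $d_i\ge d_j$. Your write-up is in fact slightly more careful than the paper's in one respect: you explicitly justify $\depth{Y'}{g_k}=\depth{Z_k}{g_k}$ via left-closedness of $Z_k$ in $Y'$, whereas the paper uses $\depth{Y'}{e'_i}=k_i$ without comment.
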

\begin{proof}
Suppose that $\mc C \dbeqt \mc D$ via DB $R$.
We show that $R$ is an SB.
Let $A \in \Nat^{\Act}$, and suppose $R(X,Y)$.

Assume $X \tranc A X'$.  Let $E = X' \setminus X$ and let $\{e_1,\ldots,e_n\}$ be an enumeration of $E$ in non-increasing order of depth w.r.t.\ $X'$,
i.e., letting $\depth {X'}{e_i} = k_i$ we have $k_i \geq k_j$ for $i < j \leq n$.
Let $\lab_{\mc C}(e_i) = a_i$ ($i \leq n$).
Then $X = X_0 \dtranc {a_1}{k_1} X_1 \cdots \dtranc {a_n}{k_n} X_n = X'$.
So $Y = Y_0 \dtrand {a_1}{k_1} Y_1 \cdots \dtrand {a_n}{k_n} Y_n = Y'$
for some $Y_1,\ldots,Y_n$ such that $R(X_i,Y_i)$ ($i \leq n$).
Let $e'_i = Y_i \setminus Y_{i-1}$ ($i = 1,\ldots, n$).
The $e'_i$ must all be pairwise concurrent:
if $i < j$ then $e'_j <_{Y'} e'_i$ is impossible
since $e'_j \notin Y_i$ and $Y_i$ is left-closed;
also, $e'_i <_{Y'} e'_j$ is impossible since $\depth {Y'}{e_i} \geq \depth {Y'}{e_j}$.
Hence $Y \trand A Y'$ with $R(X',Y')$, as required.

By symmetry, we also have that if $Y \tranc A Y'$ then $X \tranc A X'$ for some $X'$ such that $R(X',Y')$.
This shows that ${\dbeqt} \subseteq {\sbeqt}$.
The inclusion is proper by Example~\ref{ex:sb-db}(\ref{sb-db}).
\end{proof}

\begin{defn}
Let $\mc C, \mc D \in \Cstable$.
A relation $R \subseteq C_{\mc C} \cross C_{\mc D}$ is a {\em reverse depth-respecting} bisimulation (RDB) between $\mc C$ and $\mc D$ if it is a DB and
if $(X,Y) \in R$ then for $a \in \Act$ and $k \in \Nat$
\begin{itemize}
\item
if $X \rdtranc a k X'$ then $\exists Y'.\ Y \rdtrand a k Y'$ and $(X',Y') \in R$;
\item
if $Y \rdtrand a k Y'$ then $\exists X'.\ X \rdtranc a k X'$ and $(X',Y') \in R$.
\end{itemize}
We say that $\mc C$ and $\mc D$ are RDB equivalent ($\mc C \rdbeqt \mc D$) iff there is an RDB between $\mc C$ and $\mc D$.
\end{defn}

\begin{prop}\label{prop:rdb-rsb}
On stable configuration structures,
${\rdbeqt} \subseteq {\rsbeqt}$.
\end{prop}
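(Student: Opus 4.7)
The plan is to show that any RDB $R$ between $\mc C$ and $\mc D$ is already an RSB. Since every RDB is a DB, Proposition~\ref{prop:db-sb} immediately gives that $R$ is an SB, so the two forward-step clauses of RSB come for free; only the two reverse-step clauses require work. The argument I have in mind mirrors the proof of Proposition~\ref{prop:db-sb}, but runs in reverse.

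Suppose $R(X,Y)$ and $X \rtranc A X'$, so $E = X \setminus X'$ is pairwise concurrent in $X$ with $\lab(E) = A$. First I would observe that every $e \in E$ is maximal in $X$: otherwise $e <_X f$ for some $f \in X$, and $f \in X'$ would contradict left-closedness of $X'$ under $<_X$, while $f \in E$ would contradict concurrency. I would then enumerate $E$ as $e_1,\ldots,e_n$ in non-increasing order of depth, writing $a_i = \lab(e_i)$, $k_i = \depth X {e_i}$, and $X_i = X \setminus \{e_1,\ldots,e_i\}$, and show that the chain
\[
X = X_0 \rdtranc {a_1}{k_1} X_1 \rdtranc {a_2}{k_2} \cdots \rdtranc {a_n}{k_n} X_n = X'
\]
is well-defined. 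This uses three facts: (i) each $X_i$ is a configuration, because each $e_j$ is maximal in $X$; (ii) $e_i$ remains maximal in $X_{i-1}$; and (iii) $\depth{X_{i-1}}{e_i} = k_i$, which follows from ${<_{X_{i-1}}} = {<_X}\res X_{i-1}$ together with left-closedness.

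Applying the reverse clauses of the RDB $R$ step by step then produces configurations $Y = Y_0, Y_1, \ldots, Y_n = Y'$ with $Y_{i-1} \rdtrand{a_i}{k_i} Y_i$ and $R(X_i,Y_i)$. Writing $e'_i$ for the single element of $Y_{i-1} \setminus Y_i$, we have $\lab(e'_i) = a_i$ and $\depth Y {e'_i} = k_i$ (again by depth preservation in sub-configurations). It remains to show that $E' = \{e'_1,\ldots,e'_n\}$ is pairwise concurrent in $Y$, for then $Y \rtrand A Y'$ with $R(X',Y')$ as required. For $i < j$, I would argue: $e'_j \not<_Y e'_i$ because $e'_i$ is maximal in $Y_{i-1}$ and $e'_j \in Y_{j-1} \subseteq Y_{i-1}$; and $e'_i \not<_Y e'_j$ because $e'_i <_Y e'_j$ would force $k_i = \depth Y {e'_i} < \depth Y {e'_j} = k_j$, contradicting the non-increasing arrangement $k_i \geq k_j$.

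I do not expect this to be hard conceptually; the main obstacle is the bookkeeping around depth preservation in sub-configurations and around maintaining configurationhood as events are removed in the chosen order. Once those invariants are confirmed, the reverse-step case closes by exactly the depth-comparison argument used for forward steps in Proposition~\ref{prop:db-sb}.
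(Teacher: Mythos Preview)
Your overall plan matches the paper's: show that any RDB is already an RSB by mirroring the argument of Proposition~\ref{prop:db-sb} for the reverse-step clauses. However, there is a genuine gap in your concurrency argument on the $\mc D$ side.

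With the non-increasing enumeration $k_1 \geq k_2 \geq \cdots$, both of your arguments for $i<j$ in fact establish the \emph{same} direction, namely $e'_i \not<_Y e'_j$. Argument (b) is fine as stated. But argument (a) is misattributed: maximality of $e'_i$ in $Y_{i-1}$ says that nothing in $Y_{i-1}$ lies strictly \emph{above} $e'_i$, so from $e'_j \in Y_{i-1}$ you obtain $e'_i \not<_Y e'_j$, not $e'_j \not<_Y e'_i$. (Equivalently, the left-closedness argument in the reverse direction reads: $e'_j \in Y_i$ while $e'_i \notin Y_i$, so $e'_i <_Y e'_j$ would force $e'_i \in Y_i$.) The direction $e'_j \not<_Y e'_i$ thus remains unproved, and cannot be obtained from the non-increasing ordering: $e'_j <_Y e'_i$ would only give $k_j < k_i$, which is perfectly compatible with $k_i \geq k_j$. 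Concretely, if $Y$ contains events with $e'_2 <_Y e'_1$ at depths $1$ and $2$, the RDB may well produce the chain $Y \rdtrand{a_1}{2} Y_1 \rdtrand{a_2}{1} Y_2$, yet $\{e'_1,e'_2\}$ is not a concurrent set.

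The fix is simply to dualise the enumeration: in the reverse case list $E$ in \emph{non-decreasing} order of depth. The bookkeeping on the $\mc C$ side is unaffected (all $e_i$ are maximal in $X$ and pairwise concurrent, so removal in any order preserves configurations and depths). On the $\mc D$ side, left-closedness of $Y_i$ still yields $e'_i \not<_Y e'_j$ for $i<j$, while now $e'_j <_Y e'_i$ would force $k_j < k_i$, contradicting $k_j \geq k_i$. With this change your argument goes through and coincides with the paper's intended proof.
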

\begin{proof}
We show that any RDB is an RSB, by much the same method as the proof of Proposition~\ref{prop:db-sb}.
\end{proof}
We shall later (Theorem~\ref{thm:RSB=RDB}) show that the converse of Proposition~\ref{prop:rdb-rsb} also holds.

We now define {\em equidepth} step transitions, i.e.\ step transitions where all events have the same depth:
\begin{defn}
Let $\mc C = (C,\lab) \in \Cstable$ and let $A \in \Nat^\Act$.
We let $X \dtranc A = X'$ iff $X \tranc A X'$ and all events in $X' \setminus X$ have the same depth.
Also $X \rdtranc A = X'$ iff $X' \dtranc A = X$.
\end{defn}

\begin{defn}\label{def:RHESB}
Let $\mc C, \mc D \in \Cstable$.
A relation $R \subseteq C_{\mc C} \cross C_{\mc D}$ is a {\em reverse homogeneous equidepth step} bisimulation (RHESB) between $\mc C$ and $\mc D$ if $R(\emptyset, \emptyset)$ and
whenever $R(X,Y)$ then
\begin{itemize}
\item
if $X \tranc a X'$ then $\exists Y'.\ Y \trand a Y'$ 
and $R(X',Y')$;
\item
if $Y \trand a Y'$ then $\exists X'.\ X \tranc a X'$ 
and $R(X',Y')$;
\item
if $X \rdtranc A = X'$, where $A$ is homogeneous, then $\exists Y'.\ Y \rdtrand A = Y'$ and $(X',Y') \in R$;
\item
if $Y \rdtrand A = Y'$, where $A$ is homogeneous, then $\exists X'.\ X \rdtranc A = X'$ and $(X',Y') \in R$.
\end{itemize}
We say that $\mc C$ and $\mc D$ are RHESB equivalent ($\mc C \rhesbeqt \mc D$) iff there is an RHESB between $\mc C$ and $\mc D$.
\end{defn}
Notice that it is not obvious that an RSB is an RHESB, or (conversely) that an RHESB is an RHSB, because of the equidepth condition.

\begin{prop}\label{prop:rdb-rhesb}
On stable configuration structures,
${\rdbeqt} \subseteq {\rhesbeqt}$.
\end{prop}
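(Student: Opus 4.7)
The plan is to show that any RDB $R$ between $\mc C$ and $\mc D$ is already an RHESB. The forward single-event clauses are immediate: if $X \tranc a X'$, let $k$ be the depth in $X'$ of the unique new event, so $X \dtranc a k X'$; the DB part of RDB yields a matching $Y \dtrand a k Y'$, which in particular gives $Y \trand a Y'$.

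For the reverse clause, suppose $X \rdtranc A = X'$ with $A$ homogeneous (all labels equal to some $a$) and all events of $E = X \setminus X'$ at depth $k$ in $X$. Fix any enumeration $e_1,\ldots,e_n$ of $E$ and put $X_i = X \setminus \{e_1,\ldots,e_i\}$. Since $E$ is pairwise concurrent, stability ($<_{X_{i-1}} = {<_X} \res X_{i-1}$) shows the causal predecessors of $e_i$ are the same in $X$ and in $X_{i-1}$, so $\depth{X_{i-1}}{e_i} = k$ and each $X_{i-1} \rdtranc a k X_i$ is a legal single-event reverse transition. Applying the RDB clause $n$ times yields $Y = Y_0 \rdtrand a k Y_1 \rdtrand a k \cdots \rdtrand a k Y_n$ with $(X_i,Y_i) \in R$ for all $i$; set $f_i = Y_{i-1} \setminus Y_i$. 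It then suffices to show that $F = \{f_1,\ldots,f_n\}$ is a homogeneous equidepth step in $Y$ at depth $k$, for then $Y \rdtrand A = Y_n$ is the required match and $(X',Y_n) \in R$.

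Homogeneity is automatic from each $f_i$ having label $a$. For pairwise concurrency in $Y$, take $i < j$: if $f_i <_Y f_j$, then $f_i,f_j \in Y_{i-1}$ and stability gives $f_i <_{Y_{i-1}} f_j$, contradicting the maximality of $f_i$ in $Y_{i-1}$ (which is required for $Y_i = Y_{i-1} \setminus \{f_i\}$ to be a configuration); if $f_j <_Y f_i$, then by stability $f_j <_{Y_{i-1}} f_i$, hence $\depth{Y_{i-1}}{f_j} < \depth{Y_{i-1}}{f_i} = k$, but $Y_{j-1} \subseteq Y_{i-1}$ together with stability yields $\depth{Y_{i-1}}{f_j} \geq \depth{Y_{j-1}}{f_j} = k$, a contradiction. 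Concurrency in turn implies that no $f_j$ ($j \neq i$) is a causal predecessor of $f_i$ in $Y$, so the predecessors of $f_i$ agree in $Y_{i-1}$ and in $Y$, whence $\depth Y {f_i} = \depth{Y_{i-1}}{f_i} = k$. The symmetric clause follows by swapping the roles of $\mc C$ and $\mc D$.

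The main obstacle is the concurrency argument on the $\mc D$ side: while the $f_i <_Y f_j$ case ($i < j$) falls out directly from the maximality of $f_i$ in $Y_{i-1}$, the $f_j <_Y f_i$ case requires the interplay between stability, the depth-preservation of the individual RDB matches, and the equidepth hypothesis on $X$ to derive the depth squeeze $k \leq \depth{Y_{i-1}}{f_j} < k$.
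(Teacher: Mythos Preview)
Your proof is correct and follows exactly the paper's approach: show that an RDB $R$ is already an RHESB by decomposing a reverse homogeneous equidepth step $X \rdtranc A = X'$ into a chain of single reverse transitions $X \rdtranc a k \cdots \rdtranc a k X'$, matching each via the RDB clauses, and then reassembling on the $\mc D$ side. The paper's proof dispatches the reassembly step with the single word ``clearly''; what you have done is supply precisely the verification that justifies this word, namely that the matched events $f_1,\ldots,f_n$ are pairwise concurrent in $Y$ and all sit at depth $k$ there, so that $Y \rdtrand A = Y_n$ holds.
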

\begin{proof}
Let $R$ be an RDB between $\mc C, \mc D \in \Cstable$.
We show that $R$ is an RHESB.
Suppose $R(X,Y)$.  If $X \tranc a X'$ then $X \dtranc a k X'$ for some $k$.
So there is $Y'$ such that $R(X',Y')$ and $Y \dtrand a k Y'$.
Then $Y \trand a Y'$ as required.
The case for $Y \trand a Y'$ is similar.

If $X \rdtranc A = X'$, where $A = \{a,\ldots,a\}$ is homogeneous, then there is $k$ such that
$X \rdtranc a k \cdots \rdtranc a k X'$.
So there is $Y'$ such that $Y \rdtrand a k \cdots \rdtrand a k Y'$ with $R(X',Y')$.
But then clearly $Y \rdtrand A = Y'$ as required.
The case for $Y \rdtrand A = Y'$ is similar.
\end{proof}
We shall later (Theorem~\ref{thm:RSB=RDB}) show that the converse of Proposition~\ref{prop:rdb-rhesb} also holds.

First we need some further results.

\begin{lemma}\label{lem:rhesb min}
Let $\mc C,\mc D \in \Cstable$.
Suppose that $R$ is an RHESB between $\mc C$ and $\mc D$.
If $R(X,Y)$ then $R(\mn X,\mn Y)$. 
\end{lemma}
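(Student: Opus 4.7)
The plan is to mirror the proof of Lemma~\ref{lem:rhsb min}, the only extra work being to verify that each homogeneous reverse step employed there also satisfies the equidepth condition required by an RHESB. I would first observe that any RHESB is in particular an RB, because a single-event reverse transition $X \rtranc{a} X'$ coincides with the singleton reverse step $X \rdtranc{\{a\}}{=} X'$, which is trivially homogeneous and equidepth; so Lemma~\ref{lem:labels} applies to every RHESB $R$.

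Given $R(X,Y)$, by connectedness there is a chain $X \rtranc{a_1}\cdots\rtranc{a_n}\mn X$ of single reverse transitions; matching each in $R$ yields $Y'$ with $Y \rtrand{a_1}\cdots\rtrand{a_n} Y'$, $R(\mn X,Y')$ and, by Lemma~\ref{lem:labels}, $\lab(\mn X) = \lab(Y')$. It remains to show $Y' = \mn Y$. For each $a \in \lab(\mn X)$, let $E_a$ be the set of $a$-labelled events of $\mn X$ and set $A_a = \lab(E_a)$. I would reverse the non-$a$ events of $\mn X$ one at a time to arrive at $E_a$, and then take the step $E_a \rdtranc{A_a}{=}\emptyset$. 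This last step is valid because every event of $\mn X$, and hence of its sub-configuration $E_a$, is minimal in $E_a$; the events of $E_a$ are thus pairwise concurrent and all of depth~$1$ in $E_a$, so the step is simultaneously homogeneous and equidepth. Matching the whole sequence in $R$ produces $Y' \rtrand{b_1}\cdots\rtrand{b_n}\rdtrand{A_a}{=}Y''$ with $Y'' = \emptyset$ (again by Lemma~\ref{lem:labels}).

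The final reverse step forces the intermediate configuration $Y_n$, which consists of exactly the $a$-events of $Y'$, to have pairwise concurrent events, hence minimal in $Y_n$; by left-closedness of the sub-configuration $Y_n \subseteq Y$, these events are also minimal in $Y$. Since $a$ was arbitrary, $Y' \subseteq \mn Y$. A symmetric argument gives $\lab(\mn Y) \subseteq \lab(\mn X) = \lab(Y')$, whence $|Y'| = |\mn Y|$ and $Y' = \mn Y$, so $R(\mn X, \mn Y)$ as required. The only place at which RHESB might seem to impose extra difficulty over RHSB is the equidepth condition on the step $E_a \rdtranc{A_a}{=}\emptyset$, but this is automatic once one notes that all events of $E_a$ are minimal, so there is no real obstacle.
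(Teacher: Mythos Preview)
Your proposal is correct and matches the paper's approach exactly: the paper's proof simply says ``Similar to that of Lemma~\ref{lem:rhsb min}. Note that all events in $\mn X$ have depth one, and any transition involving only minimal elements is an equidepth one,'' which is precisely the observation you spell out, namely that the homogeneous reverse step $E_a \rdtranc{A_a}{=}\emptyset$ is automatically equidepth since all events of $\mn X$ are minimal. Your additional remark that every RHESB is an RB (so that Lemma~\ref{lem:labels} applies) is a useful clarification that the paper leaves implicit.
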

\begin{proof}
Similar to that of Lemma~\ref{lem:rhsb min}.
Note that all events in $\mn X$ have depth one, and any transition involving only minimal elements is an equidepth one.
\end{proof}
\begin{lemma}\label{lem:rhesb lift}
Let $\mc C,\mc D \in \Cstable$.
Suppose that $R$ is an RHESB between $\mc C$ and $\mc D$.
Let $M \in C_{\mc C}$ be such that $\mn M = M$.
Similarly, let $N \in C_{\mc D}$ be such that $\mn {N} = N$.
Suppose also that $R(M,N)$.
Define $R_{M,N}$ by
\[
R_{M,N} =
\{(X \setminus M,Y \setminus N) :
R(X,Y),\ \mn {X} = M,\ \mn {Y} = N\}
\]
Then $R_{M,N}$ is an RHESB between $\mc C_M$ and $\mc D_N$.
\end{lemma}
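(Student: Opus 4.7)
The plan is to mirror the proof of Lemma~\ref{lem:rhsb lift} closely, with one key preliminary observation that reconciles the notions of equidepth across $\mc C$ and $\mc C_M$ (and symmetrically across $\mc D$ and $\mc D_N$). The crucial preliminary is a \emph{depth-shift identity}: since $M \in C_{\mc C}$ satisfies $\mn M = M$ and $M \subseteq X$ with $\mn X = M$, the restriction $<_X \res M$ coincides with $<_M$ and is therefore empty, so the elements of $M$ form an antichain of $X$-minimal events. Combining this with the easily-checked identity $<_{X \setminus M}^{\mc C_M} = <_X \res (X \setminus M)$ (which follows from stability via bounded unions), every maximal $<_X$-chain ending at an event $e \in X \setminus M$ must begin with a single element of $M$ followed entirely by elements of $X \setminus M$, so
\[
\depth X e \;=\; \depth{X \setminus M}{e} + 1 \qquad \text{for every } e \in X \setminus M,
\]
where the left side is computed in $\mc C$ and the right side in $\mc C_M$. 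The analogous identity holds for $Y$, $N$ in $\mc D$ and $\mc D_N$, using $\mn N = N$. Consequently a set of events in $X \setminus M$ shares a common $\mc C$-depth in $X$ iff it shares a common $\mc C_M$-depth in $X \setminus M$, and likewise on the $\mc D$ side.

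Write $R'$ for $R_{M,N}$. The base pair $R'(\emptyset, \emptyset)$ is immediate from $R(M, N)$. Fix $R'(X \setminus M, Y \setminus N)$ arising from $R(X, Y)$ with $\mn X = M$ and $\mn Y = N$. For the forward single-event clauses the argument is essentially identical to that of Lemma~\ref{lem:rhsb lift}: a transition $X \setminus M \tran{a}_{\mc C_M} X' \setminus M$ lifts to $X \tranc{a} X'$, the RHESB property of $R$ supplies $Y'$ with $Y \trand{a} Y'$ and $R(X', Y')$, and Lemmas~\ref{lem:rhesb min} and~\ref{lem:labels} force $\mn{Y'} = N$, so the transition descends to $Y \setminus N \tran{a}_{\mc D_N} Y' \setminus N$ with $R'(X' \setminus M, Y' \setminus N)$.

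For the reverse homogeneous equidepth clause, suppose $X \setminus M \stackrel{A,=}{\rightsquigarrow}_{\mc C_M} X' \setminus M$ with $A$ homogeneous. The reversed events lie in $X \setminus X' \subseteq X \setminus M$ (since $M \subseteq X'$) and share a common $\mc C_M$-depth, so by the depth-shift identity they share a common $\mc C$-depth, and the step is exactly $X \rdtranc{A}{=} X'$. Applying the RHESB clause of $R$ yields $Y'$ with $Y \rdtrand{A}{=} Y'$ and $R(X', Y')$, and the same $\mn$-argument gives $\mn{Y'} = N$. The events in $Y \setminus Y' \subseteq Y \setminus N$ share a common $\mc D$-depth in $Y$, and the symmetric depth-shift identity turns this into a common $\mc D_N$-depth, so we conclude $Y \setminus N \stackrel{A,=}{\rightsquigarrow}_{\mc D_N} Y' \setminus N$ with $R'(X' \setminus M, Y' \setminus N)$, as required.

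The main obstacle is exactly the depth-shift identity above. Without the hypothesis $\mn M = M$, events at the same $\mc C_M$-depth could easily differ in $\mc C$-depth because the maximal $<_X$-chain prefixes through $M$ could have varying lengths; in that case the lifted step need not be $\mc C$-equidepth and the RHESB clause of $R$ could not be applied directly. The antichain property $<_X \res M = \emptyset$ collapses each $M$-prefix to length one, producing the uniform shift and making the rest of the argument a routine adaptation of Lemma~\ref{lem:rhsb lift}.
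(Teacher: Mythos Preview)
Your proof is correct and follows essentially the same approach as the paper, which simply says ``Similar to that of Lemma~\ref{lem:rhsb lift}, using Lemma~\ref{lem:rhesb min} instead of Lemma~\ref{lem:rhsb min}.'' You have in fact supplied the key detail the paper suppresses: the depth-shift identity $\depth X e = \depth{X\setminus M}{e}+1$ is exactly what is needed to transport the equidepth condition back and forth between $\mc C$ and $\mc C_M$ (and $\mc D$ and $\mc D_N$), and your justification of it via $\mn X = M$ being an antichain and $<_{X\setminus M}^{\mc C_M} = {<_X}\res(X\setminus M)$ is sound.
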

\begin{proof}
Similar to that of Lemma~\ref{lem:rhsb lift}, using Lemma~\ref{lem:rhesb min} instead of Lemma~\ref{lem:rhsb min}.
\end{proof}

\begin{defn}
Let $\mc C \in \Cstable$.
For $m,n \in \Nat$ ($m \leq n$) and $X \in C_{\mc C}$, let
\[
\begin{array}{l}
X_{\leq n} \Defeq \{e \in X : \depth X e \leq n\}
\\
X_{\geq n} \Defeq \{e \in X : \depth X e \geq n\}
\\
X_{[m,n]} \Defeq \{e \in X : m \leq \depth X e \leq n\}
\end{array}
\]
\end{defn}
Clearly $X_{\leq n}$ is a configuration, since it is a left-closed subset of a configuration.
Also, $X_{\leq 1} = \mn X$.  For large enough $n$, $X_{\leq n} = X$.

\begin{prop}~\label{prop:rhesb levels}
Suppose that $R$ is an RHESB between $\mc C$ and $\mc D$.
If $R(X,Y)$ then for each $n \in \Nat$ we have $R(X_{\leq n},Y_{\leq n})$.
\end{prop}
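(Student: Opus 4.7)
The plan is to induct on $n$, using Lemma~\ref{lem:rhesb lift} to reduce level $n$ in $\mc C, \mc D$ to level $n-1$ in the lifted structures $\mc C_M, \mc D_N$.

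The base case $n = 0$ is immediate, since $X_{\leq 0} = \emptyset = Y_{\leq 0}$ (every event has depth at least one) and $R(\emptyset, \emptyset)$ holds by definition. For the inductive step, suppose $n \geq 1$ and $R(X, Y)$. If $X = \emptyset$, then since any RHESB is in particular an RB (a single reverse transition is trivially homogeneous and equidepth), Lemma~\ref{lem:labels} forces $Y = \emptyset$ and the conclusion is trivial. Otherwise set $M = \mn X$ and $N = \mn Y$; Lemma~\ref{lem:rhesb min} yields $R(M, N)$, with $\mn M = M$ and $\mn N = N$, so Lemma~\ref{lem:rhesb lift} produces an RHESB $R_{M, N}$ between $\mc C_M$ and $\mc D_N$ satisfying $R_{M, N}(X \setminus M, Y \setminus N)$.

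The geometric ingredient to verify is a depth-shift identity: for each $e \in X \setminus M$, the depth of $e$ in the configuration $X \setminus M$ of $\mc C_M$ is $\depth X e - 1$ (where the latter is computed in $\mc C$). It suffices to show that the causal orderings on $X \setminus M$ induced by $\mc C_M$ and by $\mc C$ coincide. One direction is immediate, because every sub-configuration of $X \setminus M$ in $\mc C_M$ has the form $Z \setminus M$ for a sub-configuration $Z$ of $X$ in $\mc C$ with $M \subseteq Z$. For the converse, given any sub-configuration $Z'$ of $X$ in $\mc C$ containing $e'$, I form $Z = Z' \cup M$; this is a configuration by closure under bounded unions (bounded by $X$), contains $M$, and automatically satisfies $\mn Z = M$ (the minimal elements of any sub-configuration of $X$ are just its intersection with $\mn X$), so $Z \setminus M$ is a witnessing sub-configuration in $\mc C_M$. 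The hypothesis on $e, e'$ then forces $e \in Z$, and $e \notin M$ gives $e \in Z'$.

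Granted the depth shift, $(X \setminus M)_{\leq n-1}$ computed in $\mc C_M$ equals $X_{\leq n} \setminus M$ (using $M \subseteq X_{\leq n}$, which holds for $n \geq 1$), and similarly for $Y$. Applying the inductive hypothesis to $R_{M, N}$ at level $n-1$ yields $R_{M, N}(X_{\leq n} \setminus M, Y_{\leq n} \setminus N)$; unfolding the definition of $R_{M, N}$ produces a witness $(X', Y') \in R$ with $\mn{X'} = M$, $\mn{Y'} = N$, $X' \setminus M = X_{\leq n} \setminus M$ and $Y' \setminus N = Y_{\leq n} \setminus N$, forcing $X' = X_{\leq n}$ and $Y' = Y_{\leq n}$. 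The main obstacle I anticipate is the depth-shift identity itself: it is the one place where the stability axioms have to do real work, via bounded-union closure, to lift an arbitrary sub-configuration of $X$ to one containing the full minimal frontier $M$.
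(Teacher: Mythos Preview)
Your proof is correct. Both arguments rest on the same lifting machinery (Lemmas~\ref{lem:rhesb min} and~\ref{lem:rhesb lift}), but the organisation differs. The paper fixes $R$, $X$, $Y$ and builds an explicit tower $\mc C_n = (\mc C_{n-1})_{X_{[n,n]}}$, $R_n = (R_{n-1})_{X_{[n,n]},Y_{[n,n]}}$, then runs two separate inductions: an upward one establishing $R_n(X_{\geq n+1},Y_{\geq n+1})$, and a downward one recovering $R(X_{\leq n},Y_{\leq n})$ from $R_{n-1}(X_{[n,n]},Y_{[n,n]})$. You instead induct on $n$ with the statement quantified over all stable $\mc C,\mc D$ and all RHESBs, lift once by the minimal layer, and invoke the inductive hypothesis on $R_{M,N}$ at level $n-1$. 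This is cleaner---a single induction in place of two---at the mild cost of strengthening the inductive hypothesis to range over arbitrary RHESBs rather than just the fixed $R$. You also spell out the depth-shift identity (that $<_X$ restricts to the $\mc C_M$-ordering on $X\setminus M$, via bounded-union closure), which the paper uses implicitly when it writes ``$X_{[n+1,n+1]} = \mn{X_{\geq n+1}}$'' without justification.
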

\begin{proof}
Suppose that $R(X,Y)$.
We define configuration structures $\mc C_n,\mc D_n$ as follows:
\[
\begin{array}{l}
\mc C_0 \Defeq \mc C
\\
\mc C_{n+1} \Defeq (\mc C_n)_{X_{[n+1,n+1]}}
\end{array}
\]
(see Definition~\ref{def:cs lift}), and similarly for $\mc D_n$.
Note that $X_{[n+1,n+1]} = \mn{X_{\geq n+1}}$.
To ensure that $\mc C_n$ is well-defined, we show that $X_{\geq n+1}$ (and hence $X_{[n+1,n+1]}$) is a configuration of $\mc C_n$.  This is easily done by induction on $n$ (omitted).  Similarly for $\mc D_n$.

We also define RHESBs $R_n$ as follows:
\[
\begin{array}{l}
R_0 \Defeq R
\\
R_{n+1} \Defeq (R_n)_{X_{[n+1,n+1]},Y_{[n+1,n+1]}}
\end{array}
\]
(see Lemma~\ref{lem:rhesb lift}).
To ensure that $R_{n+1}$ is well-defined (and an RHESB) we need
$R_n(X_{[n+1,n+1]},Y_{[n+1,n+1]})$
to hold.
We can prove $R_n(X_{\geq n+1},Y_{\geq n+1})$ by an easy induction (omitted).
Using Lemma~\ref{lem:rhesb min}, we then deduce $R_n(X_{[n+1,n+1]},Y_{[n+1,n+1]})$.

Now we show $R(X_{\leq n},Y_{\leq n})$.  If $n = 0$ this is just $R(\emptyset,\emptyset)$, which is true by definition.  Suppose $n \geq 1$.  We show by induction on $i$ that for $1 \leq i \leq n$, $R_{n-i}(X_{[n-i+1,n]},Y_{[n-i+1,n]})$.
For $i = 1$ this is just $R_{n-1}(X_{[n,n]},Y_{[n,n]})$, which we have already shown.  Suppose that $R_{n-i}(X_{[n-i+1,n]},Y_{[n-i+1,n]})$ ($i<n$).
Then
\[
R_{n-i-1}(X_{[n-i+1,n]} \union X_{[n-i,n-i]},Y_{[n-i+1,n]} \union Y_{[n-i,n-i]})
\]
i.e.\ $R_{n-(i+1)}(X_{[n-(i+1)+1,n]},Y_{[n-(i+1)+1,n]})$, as required.

Putting $i = n$, we get $R_0(X_{[1,n]},Y_{[1,n]})$, i.e.\ $R(X_{\leq n},Y_{\leq n})$.
\end{proof}
\begin{cor}~\label{cor:rhesb levels}
Suppose that $R$ is an RHESB between $\mc C$ and $\mc D$.
If $R(X,Y)$ then for each $n \in \Nat$ we have $\lab(X_{[n,n]}) = \lab(Y_{[n,n]})$.
\end{cor}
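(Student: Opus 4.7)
The plan is to derive the corollary as a straightforward consequence of Proposition~\ref{prop:rhesb levels} together with the labels lemma (Lemma~\ref{lem:labels}), combined with a multiset subtraction argument.

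First I would observe that Lemma~\ref{lem:labels}, although stated for RB, actually applies to any RHESB. Its proof uses only reverse single-event transitions $X \rtranc{a} \cdots \rtranc{a} \emptyset$ obtained from connectedness, and such transitions are always permitted by an RHESB since a single-event multiset $\{a\}$ is trivially both homogeneous and equidepth. Hence whenever $R$ is an RHESB and $R(X',Y')$, we get the multiset equality $\lab(X') = \lab(Y')$.

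Next, I would apply Proposition~\ref{prop:rhesb levels} to both $n$ and $n-1$ (for the degenerate case $n=0$ the statement is trivial because every event has depth at least one, so $X_{[0,0]}=Y_{[0,0]}=\emptyset$). This yields $R(X_{\leq n},Y_{\leq n})$ and $R(X_{\leq n-1},Y_{\leq n-1})$. Combining with the extended labels lemma gives
\[
\lab(X_{\leq n}) = \lab(Y_{\leq n}) \quad\text{and}\quad \lab(X_{\leq n-1}) = \lab(Y_{\leq n-1}).
\]

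Finally, since $X_{\leq n}$ is the disjoint union of $X_{\leq n-1}$ and $X_{[n,n]}$ (and similarly on the $\mc D$ side), we have $\lab(X_{\leq n}) = \lab(X_{\leq n-1}) + \lab(X_{[n,n]})$ as multisets, and analogously for $Y$. Subtracting the two equalities as multisets yields $\lab(X_{[n,n]}) = \lab(Y_{[n,n]})$, as required. There is no real obstacle here; the only mildly subtle point is the observation that Lemma~\ref{lem:labels} transfers from RBs to RHESBs because single reverse transitions qualify as reverse homogeneous equidepth steps.
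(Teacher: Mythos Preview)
Your argument is correct and follows essentially the same route as the paper, which simply cites Proposition~\ref{prop:rhesb levels} and Lemma~\ref{lem:labels}; you have merely spelled out the multiset subtraction step and the (implicit) fact that an RHESB is in particular an RB, so that Lemma~\ref{lem:labels} applies.
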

\begin{proof}
By Proposition~\ref{prop:rhesb levels} and Lemma~\ref{lem:labels}.
\end{proof}
\begin{prop}~\label{prop:rhesb depth}
Let $\mc C,\mc D \in \Cstable$.
Suppose that $R$ is an RHESB between $\mc C$ and $\mc D$.
If $X \dtranc a k X'$, $Y \dtrand a {k'} Y'$, with $R(X,Y)$, $R(X',Y')$, then $k = k'$.
\end{prop}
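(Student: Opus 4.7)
The plan is to reduce the statement to Corollary~\ref{cor:rhesb levels} by observing how the depth-stratified labels $\lab(X_{[n,n]})$ change when a single event is appended. The central fact I would use is the stability remark already noted in the paper: if $X \union X' \subseteq Z$ for some configuration $Z$, then $\depth X d = \depth {X'} d$ for every $d \in X \inter X'$. Since $X \subseteq X'$ and $X'$ itself bounds $X \union X'$, the depth of every event already in $X$ is unchanged in $X'$. The same holds for $Y$ and $Y'$.

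With this in hand, let $e$ be the unique event of $X' \setminus X$, so $\lab(e) = a$ and $\depth {X'} e = k$. Then for every $n \in \Nat$,
\[
X'_{[n,n]} = \left\{ \begin{array}{ll} X_{[n,n]} & \mbox{if } n \neq k, \\ X_{[k,k]} \union \{e\} & \mbox{if } n = k. \end{array} \right.
\]
An analogous equation holds for $Y'_{[n,n]}$ in terms of $Y_{[n,n]}$, with $k'$ and some event $e'$ of label $a$ in place of $k,e$. Taking multiset labels, this gives $\lab(X'_{[n,n]}) = \lab(X_{[n,n]})$ for $n \neq k$ and $\lab(X'_{[k,k]}) = \lab(X_{[k,k]}) + \{a\}$, and symmetrically for $Y$.

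Now I would apply Corollary~\ref{cor:rhesb levels} twice: once to $R(X,Y)$ to obtain $\lab(X_{[n,n]}) = \lab(Y_{[n,n]})$ for every $n$, and once to $R(X',Y')$ to obtain $\lab(X'_{[n,n]}) = \lab(Y'_{[n,n]})$ for every $n$. Arguing by contradiction, suppose $k \neq k'$ and examine level $n = k$: on the $\mc C$-side, $\lab(X'_{[k,k]}) = \lab(X_{[k,k]}) + \{a\}$, while on the $\mc D$-side, $\lab(Y'_{[k,k]}) = \lab(Y_{[k,k]})$ because $k \neq k'$. Combined with $\lab(X_{[k,k]}) = \lab(Y_{[k,k]})$ this yields $\lab(Y'_{[k,k]}) + \{a\} = \lab(X'_{[k,k]}) = \lab(Y'_{[k,k]})$, a contradiction. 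Hence $k = k'$.

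The step I expect to require a little care is verifying the depth invariance of existing events when we pass from $X$ to $X' = X \cup \{e\}$, which is where stability is essential: without it, adding $e$ could create shorter or longer causal chains for events in $X$. Once that is pinned down, everything else is a routine bookkeeping argument on multisets at each level, and the contradiction falls out immediately.
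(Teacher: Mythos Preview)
Your argument is correct and is exactly the approach the paper takes: apply Corollary~\ref{cor:rhesb levels} to both $R(X,Y)$ and $R(X',Y')$ and compare the level-by-level label multisets to force the single new event on each side into the same depth. You have simply spelled out the bookkeeping (depth invariance under $X \subseteq X'$, the explicit description of $X'_{[n,n]}$, and the contradiction at level $k$) that the paper compresses into one sentence.
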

\begin{proof}
By Corollary~\ref{cor:rhesb levels}, we know that $X,Y$ (resp.\ $X',Y'$) have the same multisets of events at each level $n$.  Hence the single events in $X' \setminus X$ and $Y' \setminus Y$ must have the same depth.
\end{proof}

So if $R(X,X')$ then
$X$ and $X'$ have similar structure, in that for each depth $n$, they have the same multisets of labelled events at that depth.  We can say that $X$ and $X'$ have similar amounts of concurrency (including auto-concurrency): $X$ and $X'$ have the same depth, and the same ``width'' at each level $n$.  Of course, this does not imply the stronger statement that $X$ and $X'$ are isomorphic, since we are not claiming that $X$ and $X'$ have the same causal relationships between the levels.
\begin{theorem}\label{thm:RSB=RDB}
On stable configuration structures,
${\rdbeqt} = {\rhesbeqt} = {\rsbeqt}$\ .
\end{theorem}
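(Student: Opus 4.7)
The plan is to close a cycle of inclusions. Propositions~\ref{prop:rdb-rsb} and~\ref{prop:rdb-rhesb} already give ${\rdbeqt} \subseteq {\rsbeqt}$ and ${\rdbeqt} \subseteq {\rhesbeqt}$, so it suffices to establish (i) every RHESB is an RDB, and (ii) every RSB is an RHESB. Combined with the existing inclusions, the three equivalences then collapse: ${\rsbeqt} \subseteq {\rhesbeqt} \subseteq {\rdbeqt} \subseteq {\rsbeqt}$ and ${\rdbeqt} \subseteq {\rhesbeqt} \subseteq {\rdbeqt}$.

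For (i), let $R$ be an RHESB with $R(X,Y)$. For a forward transition $X \dtranc a k X'$ I invoke the single-event forward clause of $R$ to obtain $Y'$ with $Y \trand a Y'$ and $R(X',Y')$; writing $Y \dtrand a {k'} Y'$, Proposition~\ref{prop:rhesb depth} forces $k = k'$. For a reverse transition $X \rdtranc a k X'$ I use that $\{a\}$ is vacuously homogeneous and equidepth: the reverse RHESB clause yields $Y'$ with $Y \rtrand a Y'$ and $R(X',Y')$, and writing $Y \rdtrand a {k'} Y'$ amounts to $Y' \dtrand a {k'} Y$. Applying Proposition~\ref{prop:rhesb depth} with the primed configurations as origins (so its hypotheses $R(X,Y)$ and $R(X',Y')$ are both furnished) again gives $k = k'$.

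For (ii), let $R$ be an RSB. Since any RSB satisfies the (strictly weaker) clauses of an RHSB, $R$ is also an RHSB. The crucial preparatory step is to establish the analog of Corollary~\ref{cor:rhesb levels} for RHSBs: if $R(X,Y)$ then $\lab(X_{[n,n]}) = \lab(Y_{[n,n]})$ for every $n$. The proofs of Proposition~\ref{prop:rhesb levels} and Corollary~\ref{cor:rhesb levels} transport verbatim to the RHSB setting, using Lemmas~\ref{lem:rhsb min} and~\ref{lem:rhsb lift} in place of Lemmas~\ref{lem:rhesb min} and~\ref{lem:rhesb lift}; the iterated lifting $R_{n+1} = (R_n)_{X_{[n+1,n+1]},Y_{[n+1,n+1]}}$ nowhere uses the equidepth hypothesis. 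With level-matching in hand I verify the RHESB clauses for $R$. The forward single-event clause is immediate from the RSB's forward step at $A = \{a\}$. For the reverse homogeneous equidepth clause, suppose $X \rdtranc A = X'$ with $A$ homogeneous and $E = X \setminus X'$ all at depth $k$ in $X$; the RSB produces $Y'$ with $Y \rtrand A Y'$ and $R(X',Y')$, and I set $F = Y \setminus Y'$. Since $F$ is pairwise concurrent in $Y$ and $Y' = Y \setminus F$ is left-closed, the events of $F$ are all maximal in $Y$, so depths in $Y$ and $Y'$ agree on $Y'$. Writing $E_n = \{e \in E : \depth X e = n\}$ and $F_n = \{f \in F : \depth Y f = n\}$, this gives $X'_{[n,n]} = X_{[n,n]} \setminus E_n$ and $Y'_{[n,n]} = Y_{[n,n]} \setminus F_n$; level-matching applied to both $R(X,Y)$ and $R(X',Y')$ forces $\lab(F_n) = \lab(E_n)$, which is empty for $n \neq k$. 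Hence $F = F_k$ is equidepth at depth $k$ in $Y$, and $Y \rdtrand A = Y'$ as required.

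I expect the main obstacle to be the transport of Proposition~\ref{prop:rhesb levels} to the RHSB setting: one must verify that each lifting $R_{n+1}$ yields an RHSB (so that the induction can continue) and that minimum-preservation (Lemma~\ref{lem:rhsb min}) applies at each stage. All three ingredients (Lemmas~\ref{lem:labels}, \ref{lem:rhsb min} and~\ref{lem:rhsb lift}) are available in their RHSB form with statements matching the RHESB case, so the inductive construction carries through essentially unchanged; but the bookkeeping around the iterated peeling of levels is the delicate part.
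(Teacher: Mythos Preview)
Your proposal is correct and follows essentially the same strategy as the paper: the key move in both is to transport the level-matching machinery (Proposition~\ref{prop:rhesb levels} and Corollary~\ref{cor:rhesb levels}) from the RHESB setting to the RHSB setting, using Lemmas~\ref{lem:rhsb min} and~\ref{lem:rhsb lift}. The paper organises the cycle slightly differently---it derives an RHSB version of Proposition~\ref{prop:rhesb depth} and uses it to conclude ${\rhsbeqt} \subseteq {\rdbeqt}$ directly (whence ${\rsbeqt} \subseteq {\rhsbeqt} \subseteq {\rdbeqt}$), rather than your route of verifying the equidepth reverse clause by hand to get ${\rsbeqt} \subseteq {\rhesbeqt}$; your direct argument is sound, but note that once you have the RHSB version of Corollary~\ref{cor:rhesb levels} the depth-preservation statement is immediate and yields the paper's shorter detour through ${\rdbeqt}$.
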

\begin{proof}
We have ${\rhesbeqt} \subseteq {\rdbeqt}$ by Proposition~\ref{prop:rhesb depth}.
Conversely, ${\rdbeqt} \subseteq {\rhesbeqt}$ by Proposition~\ref{prop:rdb-rhesb}.

We can prove versions of Proposition~\ref{prop:rhesb levels}, Corollary~\ref{cor:rhesb levels},
Proposition~\ref{prop:rhesb depth},
where we replace ``RHESB'' by ``RHSB''.
This shows that ${\rhsbeqt} \subseteq {\rdbeqt}$.
But ${\rsbeqt} \subseteq {\rhsbeqt}$, and so ${\rsbeqt} \subseteq {\rdbeqt}$.
Conversely, ${\rdbeqt} \subseteq {\rsbeqt}$ by Proposition~\ref{prop:rdb-rsb}.
\end{proof}
\begin{figure}
\centering
\psfrag{ib}{ib}
\psfrag{rib}{rb}
\psfrag{sb}{sb}
\psfrag{db}{db}
\psfrag{rsb=rhsb=rhesb=rdb}{rsb = rhsb = rhesb = rdb}
\psfrag{hh}{hh}
\centering
\includegraphics[width=1.2in]{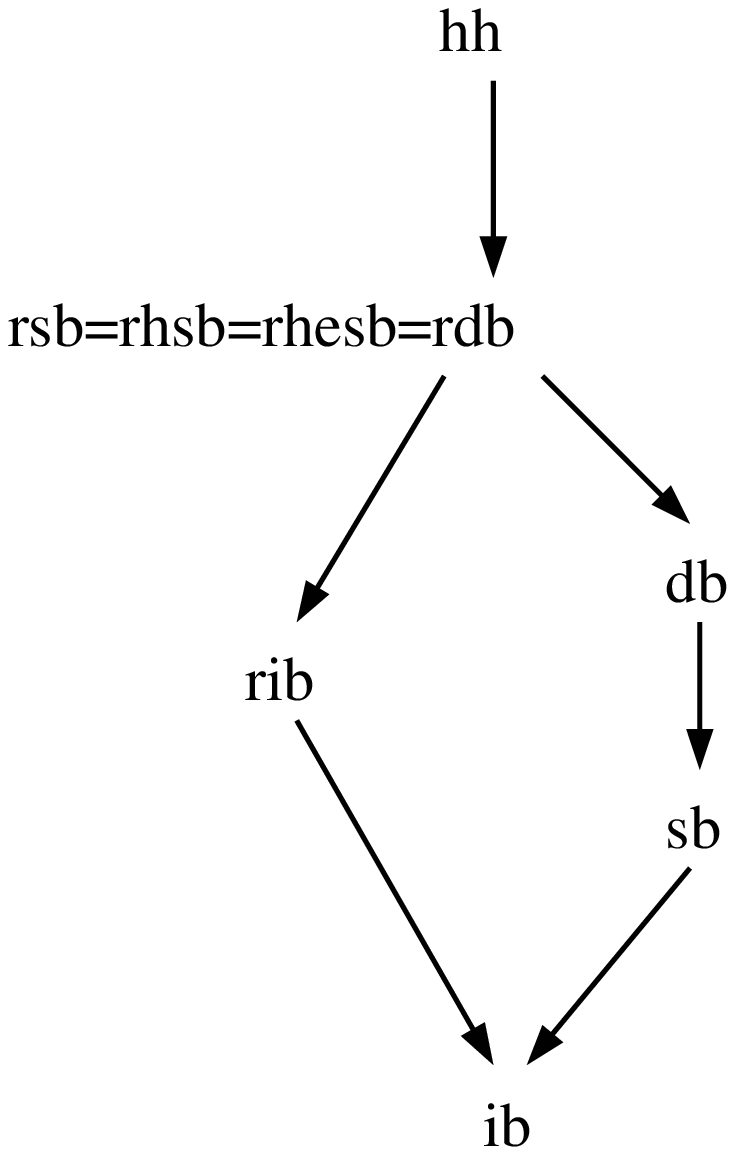}
\caption{Equivalences discussed in this paper.}\label{hierarchy1}
\end{figure}

\begin{rem}
The proof of Theorem~\ref{thm:RSB=RDB} shows that ${\rhsbeqt} \subseteq {\rdbeqt} \subseteq {\rsbeqt}$.
This gives us an alternative proof that any RHSB is an RSB (Theorem~\ref{thm:RHSB=RSB}),
bypassing Lemmas~\ref{lem:fwd step} and~\ref{lem:rev step}.
\end{rem}
Figure~\ref{hierarchy1} shows a diagram of the equivalences discussed in this paper;
arrows represent inclusion.

\section{Reverse Bisimulation and HH bisimulation}\label{sec:rb hh}

In this section we extend a result of Bednarczyk which shows that RB equivalence is as strong as HH equivalence in the absence of auto-concurrency.  We make use of results from Section~\ref{subsec:rdb} on depth-respecting bisimulations.

\begin{defn}
We say that $\mc C \in \Cstable$ is {\em without auto-concurrency} if for any $X \in C_{\mc C}$ and any $d,e \in X$, if $d \co_X e$ and $\lab(d) = \lab(e)$ then $d = e$.
\end{defn}

Bednarczyk showed the following:

\begin{theorem}[\cite{Bed91}]\label{thm:bed}
For prime event structures, in the absence of auto-concurrency, ${\rbeqt} = {\hheqt}$\ .
\end{theorem}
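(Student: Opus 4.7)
The easy direction $\hheqt \subseteq \rbeqt$ is immediate: given an HH bisimulation, forgetting the isomorphism component yields an RB. So the task reduces to proving $\rbeqt \subseteq \hheqt$ under the no-auto-concurrency hypothesis. The plan is to extract from any RB $R$ between $\mc C$ and $\mc D$ a canonical label-preserving order-isomorphism $f_{X,Y}:(X,<_X)\to(Y,<_Y)$ for every $(X,Y)\in R$, and then verify that
$R' = \{(X,Y,f_{X,Y}) : (X,Y)\in R\}$
is an HH bisimulation.

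The key observation is uniqueness. In a configuration without auto-concurrency, any two events sharing a label must be $<_X$-comparable, for otherwise they would witness auto-concurrency; so the events of any given label in $X$ form a $<_X$-chain. Consequently, any label-preserving order-isomorphism $(X,<_X)\to (Y,<_Y)$ is forced on each label-class by the order alone, and is therefore unique when it exists.

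Existence of $f_{X,Y}$ I would prove by induction on $|X|$. The empty case is trivial. For the step, pick a $<_X$-maximal event $e\in X$ with label $a$, so $X\rtranc{a}X\setminus\{e\}$, matched by some $Y\rtrand{a}Y\setminus\{e'\}$ with $(X\setminus\{e\},Y\setminus\{e'\})\in R$ and $e'$ maximal in $Y$. By induction there is a unique label-preserving order-isomorphism $g:X\setminus\{e\}\to Y\setminus\{e'\}$; extend by $e\mapsto e'$. The delicate step is showing that this extension respects order, i.e.\ that $g$ carries $\{d\in X : d<_X e\}$ bijectively onto $\{d'\in Y : d'<_Y e'\}$. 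For this, apply the RB clause to a reverse sequence from $X$ that topologically removes all events outside the downward closure of $e$, arriving at the sub-configuration $\{d\in X : d\leq_X e\}$; the induced reverse sequence in $\mc D$ leads to some $Z$ with $(\{d\in X : d\leq_X e\},Z)\in R$, and a secondary appeal to the inductive hypothesis together with uniqueness pins down $Z = \{d'\in Y : d'\leq_Y e'\}$ and forces $g$ to agree on causes.

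It then remains to check the HH clauses for $R'$. For a forward transition $X\tranc{a}X'$, RB supplies $Y\trand{a}Y'$ with $(X',Y')\in R$; both $f_{X,Y}$ and $f_{X',Y'}$ exist by the construction, and uniqueness forces $f_{X',Y'}\res X = f_{X,Y}$. The reverse clause is handled symmetrically. The principal obstacle is the existence step --- specifically, transporting causal histories across $R$. This is exactly where the no-auto-concurrency hypothesis is indispensable: without it, labels alone would not suffice to align the causes of $e$ with those of $e'$, and nothing in an RB would compensate for that gap.
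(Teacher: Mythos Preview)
The paper does not prove Theorem~\ref{thm:bed}; it is cited from Bednarczyk. What the paper \emph{does} prove is the strengthening Theorem~\ref{thm:equidepth} (no \emph{equidepth} auto-concurrency, stable configuration structures), and that argument is structurally quite different from yours. The paper defines $f_{X,Y}$ directly by matching depth and label: $f_{X,Y}(d)$ is the unique $e\in Y$ with $\lab(e)=\lab(d)$ and $\depth Y e=\depth X d$. Well-definedness and bijectivity come from the level-wise label correspondence (Corollary~\ref{cor:rb levels}), and order-preservation and the restriction property from depth preservation (Proposition~\ref{prop:rb depth}). There is no induction on $|X|$ and no reliance on label classes being chains; indeed the paper remarks that its proof is ``rather different from that of Theorem~\ref{thm:bed}''. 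Your route, by contrast, uses the full no-auto-concurrency hypothesis to get that each label class in a configuration is a $<_X$-chain, which drives your uniqueness lemma. This is closer in spirit to Bednarczyk's original, but it does not extend to the equidepth setting: once same-label events at different depths may be concurrent, label classes need not be chains and your uniqueness argument collapses. So the paper's depth-based machinery buys the generalisation, while your argument is more self-contained under the stronger hypothesis.

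Your sketch is essentially sound, but the ``delicate step'' has two points you should make explicit. First, applying the inductive hypothesis to the pair $(\{d:d\leq_X e\},Z)$ requires $|\{d:d\leq_X e\}|<|X|$; this fails exactly when $e$ is the unique $<_X$-maximal event of $X$, and in that case you separately need that $e'$ is the unique maximal event of $Y$ (easy: any other maximal $e''\in Y$, when reversed, must be matched by removing $e$ from $X$, forcing $\lab(e'')=a$, so $e'$ and $e''$ are two maximal $a$-events and hence equal). Second, you should say why $e'\in Z$: every $a$-labelled event of $X$ lies below $e$ (the $a$-events form a chain and $e$ is maximal), so none of the labels reversed when passing from $X$ to $\{d:d\leq_X e\}$ is $a$, and hence the matching reversals in $Y$ never remove $e'$. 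With these two points added, your identification $Z=\{d':d'\leq_Y e'\}$ and the agreement of $g$ with the induced isomorphism on causes go through as you indicate.
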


Before extending Theorem~\ref{thm:bed} in Theorem~\ref{thm:equidepth} below, we need a definition.  We say that a configuration structure is {\em without equidepth auto-concurrency}, if for any pair of distinct concurrent events, they cannot have both the same label and the same depth.  This is less restrictive than ``without auto-concurrency'', in that we allow auto-concurrent events, providing they are at different depths.
For example, the configuration structure $a \Par b.a$ exhibits auto-concurrency between the two events labelled $a$, but not equidepth auto-concurrency, since the two $a$ events are at depths one and two respectively.

\begin{defn}
We say that $\mc C \in \Cstable$ is {\em without equidepth auto-concurrency} if for any $X \in C_{\mc C}$ and any $d,e \in X$, if $d \co_X e$ with both $\lab(d) = \lab(e)$ and $\depth X d = \depth X e$,  then $d = e$.
\end{defn}

Notice that without equidepth auto-concurrency, an RB is an RHESB, since all reverse homogeneous equidepth steps must have just a single event: $X \rdtranc A = X'$ with $A$ homogeneous implies that $\card A = 1$.
Therefore we immediately have the following versions of
Proposition~\ref{prop:rhesb levels},
Corollary~\ref{cor:rhesb levels} and Proposition~\ref{prop:rhesb depth}:

\begin{prop}~\label{prop:rb levels}
Let $\mc C,\mc D \in \Cstable$ be without equidepth auto-concurrency.
Suppose that $R$ is an RB between $\mc C$ and $\mc D$.
If $R(X,Y)$ then for each $n \in \Nat$ we have $R(X_{\leq n},Y_{\leq n})$.
\qed
\end{prop}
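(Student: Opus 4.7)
The plan is to leverage the observation made in the paragraph immediately preceding the proposition, namely that under the no-equidepth-auto-concurrency hypothesis, every RB is automatically an RHESB, and then to invoke Proposition~\ref{prop:rhesb levels} directly. So the proof is essentially a two-step reduction: (i) upgrade the given RB $R$ to an RHESB; (ii) apply the existing RHESB-level result.

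For step (i), the forward single-event clauses of Definition~\ref{def:RHESB} literally coincide with those of an RB, so nothing needs to be checked there. The substance is in the reverse clauses. Suppose $X \rdtranc A = X'$ with $A$ homogeneous. The events in $X \setminus X'$ are pairwise concurrent in $X$, share a common label, and all lie at a common depth w.r.t.\ $X$; by the no-equidepth-auto-concurrency assumption on $\mc C$, such a family contains at most one element, so $\card A = 1$. The step is therefore just a single reverse transition $X \rtranc a X'$, which by the RB clause is matched by some $Y \rtrand a Y'$ with $R(X',Y')$; since $Y\setminus Y'$ is a singleton this is trivially a reverse homogeneous equidepth step in $\mc D$. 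The symmetric clause for $Y \rdtrand A = Y'$ uses the corresponding hypothesis on $\mc D$.

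With $R$ now recognised as an RHESB between $\mc C$ and $\mc D$, step (ii) is immediate: Proposition~\ref{prop:rhesb levels} yields $R(X_{\leq n},Y_{\leq n})$ for every $n \in \Nat$. No substantial obstacle arises; the only thing to be careful about is to use the no-equidepth-auto-concurrency assumption on each of $\mc C$ and $\mc D$ when verifying the corresponding direction of the RHESB reverse-step clause.
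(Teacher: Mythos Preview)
Your proposal is correct and follows exactly the paper's approach: the paper observes just before the proposition that without equidepth auto-concurrency every reverse homogeneous equidepth step is a single-event transition, so any RB is automatically an RHESB, and then states Proposition~\ref{prop:rb levels} as an immediate instance of Proposition~\ref{prop:rhesb levels}. Your write-up simply spells out that observation in a bit more detail.
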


\begin{cor}~\label{cor:rb levels}
Let $\mc C,\mc D \in \Cstable$ be without equidepth auto-concurrency.
Suppose that $R$ is an RB between $\mc C$ and $\mc D$.
If $R(X,Y)$ then for each $n \in \Nat$ we have $\lab(X_{[n,n]}) = \lab(Y_{[n,n]})$.
\qed
\end{cor}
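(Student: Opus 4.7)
The plan is to derive this corollary directly from the just-established Proposition~\ref{prop:rb levels} together with Lemma~\ref{lem:labels}, following exactly the same pattern that gave Corollary~\ref{cor:rhesb levels} from Proposition~\ref{prop:rhesb levels}. The key observation, already flagged in the paragraph preceding Proposition~\ref{prop:rb levels}, is that under the no-equidepth-auto-concurrency assumption every reverse homogeneous equidepth step $X \rdtranc A = X'$ must satisfy $|A|=1$, so an RB automatically satisfies the RHESB clauses, and thus the RHESB results transfer verbatim.

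First I would invoke Proposition~\ref{prop:rb levels} twice, at depth $n$ and at depth $n-1$ (using the convention $X_{\leq 0} = \emptyset$), to obtain $R(X_{\leq n},Y_{\leq n})$ and $R(X_{\leq n-1},Y_{\leq n-1})$. Since $R$ is in particular an RB, Lemma~\ref{lem:labels} then yields the multiset equalities $\lab(X_{\leq n}) = \lab(Y_{\leq n})$ and $\lab(X_{\leq n-1}) = \lab(Y_{\leq n-1})$.

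Second, I would observe that for each $n \geq 1$ the configuration $X_{\leq n}$ decomposes as the disjoint union $X_{\leq n-1} \cup X_{[n,n]}$, so on labels we have $\lab(X_{\leq n}) = \lab(X_{\leq n-1}) \uplus \lab(X_{[n,n]})$ as multisets, and likewise for $Y$. Subtracting the two multiset equalities from the previous step then immediately gives $\lab(X_{[n,n]}) = \lab(Y_{[n,n]})$, completing the argument.

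I do not anticipate any real obstacle here: all the substance is packed into Proposition~\ref{prop:rb levels}, whose proof in turn is a mechanical transposition of the RHESB version once one notes that homogeneous equidepth reverse steps degenerate to single transitions in the absence of equidepth auto-concurrency. The present corollary is merely a bookkeeping consequence, analogous to the passage from Proposition~\ref{prop:rhesb levels} to Corollary~\ref{cor:rhesb levels}.
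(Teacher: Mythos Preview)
Your proposal is correct and matches the paper's approach: the corollary is stated with a \qed and is the direct analogue of Corollary~\ref{cor:rhesb levels}, whose proof reads simply ``By Proposition~\ref{prop:rhesb levels} and Lemma~\ref{lem:labels}.'' Your expansion---applying Proposition~\ref{prop:rb levels} at levels $n$ and $n-1$, invoking Lemma~\ref{lem:labels}, and subtracting the resulting multiset equalities---is exactly the intended unpacking of that one-line argument.
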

Note that the no equidepth auto-concurrency condition means that $\lab(X_{[n,n]}), \lab(Y_{[n,n]})$
are sets rather than multisets, since events at the same depth must be concurrent.

\begin{prop}~\label{prop:rb depth}
Let $\mc C,\mc D \in \Cstable$ be without equidepth auto-concurrency.
Suppose that $R$ is an RB between $\mc C$ and $\mc D$.
If $X \dtranc a k X'$, $Y \dtrand a {k'} Y'$, with $R(X,Y)$, $R(X',Y')$, then $k = k'$.
\qed
\end{prop}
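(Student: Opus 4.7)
The plan is to exploit the observation made just above the statement of Proposition~\ref{prop:rb depth}: under the no-equidepth-auto-concurrency hypothesis, every RB is automatically an RHESB. Once this is established, the proposition becomes an immediate instance of Proposition~\ref{prop:rhesb depth} applied to the very same relation $R$, which delivers $k = k'$ from the hypotheses $X \dtranc a k X'$, $Y \dtrand a {k'} Y'$, $R(X,Y)$ and $R(X',Y')$.

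The substantive content is therefore the verification that an RB $R$ between $\mc C$ and $\mc D$ satisfies the clauses of Definition~\ref{def:RHESB}. The forward single-event clauses come for free from the IB part of $R$. For the reverse homogeneous equidepth step clauses, I would consider any $X \rdtranc A = X'$ with $A$ homogeneous of label $a$: by definition the events of $E = X' \setminus X$ are pairwise concurrent in $X'$, all labelled $a$, and all of the same depth in $X'$, so the no-equidepth-auto-concurrency assumption applied at $X'$ forces $E$ to be a singleton. The step is therefore a single reverse transition $X \rtranc a X'$, and the RB clause supplies $Y'$ with $Y \rtrand a Y'$ and $R(X',Y')$; since only one event is removed, $Y \rtrand a Y'$ is trivially a reverse homogeneous equidepth step of label $a$ (and the analogous singleton argument on the $\mc D$ side is not even needed, as one event is equidepth by default). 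The symmetric clause for $Y$-moves is handled identically.

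I do not anticipate any real obstacle. The only place that warrants care is the singleton argument above, which is precisely where the no-equidepth-auto-concurrency hypothesis is consumed; this is why the paper signals that Proposition~\ref{prop:rb depth}, together with Proposition~\ref{prop:rb levels} and Corollary~\ref{cor:rb levels}, ``immediately'' specialise from the corresponding RHESB results in Section~\ref{subsec:rdb}.
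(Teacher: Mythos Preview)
Your approach is exactly the paper's: the paragraph preceding Proposition~\ref{prop:rb depth} already states that, without equidepth auto-concurrency, every reverse homogeneous equidepth step is a singleton, so any RB is an RHESB and Proposition~\ref{prop:rhesb depth} applies verbatim. One small slip to fix: for $X \rdtranc A = X'$ the removed events are $E = X \setminus X'$ (not $X' \setminus X$), and their concurrency, labels, and depths are taken in $X$, not in $X'$; with that corrected, your singleton argument goes through unchanged.
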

Proposition~\ref{prop:rb depth} would not necessarily hold in the presence of equidepth auto-concurrency.
For example, we have $a \Par a \rbeqt a.a$, but the two $a$ events are both at depth one in the case of $a \Par a$,
whereas they are at depths one and two respectively in the case of $a.a$.
\begin{theorem}\label{thm:equidepth}
In the absence of equidepth auto-concurrency, ${\rbeqt} = {\hheqt}$\ .
\end{theorem}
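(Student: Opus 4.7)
The forward inclusion ${\hheqt} \subseteq {\rbeqt}$ is immediate from Proposition~\ref{prop:inclusions}(1,2). For the converse, let $R$ be an RB between $\mc C, \mc D \in \Cstable$, both without equidepth auto-concurrency. I would build an HH bisimulation directly from $R$. For each $(X,Y) \in R$, Corollary~\ref{cor:rb levels} yields $\lab(X_{[n,n]}) = \lab(Y_{[n,n]})$ for every $n \in \Nat$; since the no-equidepth-auto-concurrency hypothesis forces events at the same depth to have pairwise distinct labels, these multisets are actually sets. Thus I can define a label-preserving bijection $f_{X,Y} : X \to Y$ by sending each $e \in X$ to the unique event $e' \in Y$ with $\lab(e') = \lab(e)$ and $\depth Y {e'} = \depth X e$. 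The candidate HH bisimulation is $S = \{(X,Y,f_{X,Y}) : (X,Y) \in R\}$.

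The main obstacle will be showing that each $f_{X,Y}$ is order-preserving. Suppose $d <_X e$; this is equivalent to saying that $d$ lies in every sub-configuration of $X$ containing $e$. I would then take any sub-configuration $Y^* \subseteq Y$ containing $f_{X,Y}(e)$, reach $Y^*$ from $Y$ by a sequence of reverse transitions, and use RB to lift this to a matching reverse sequence from $X$ to some $X^*$ with $(X^*, Y^*) \in R$. Applying Corollary~\ref{cor:rb levels} to both pairs $(X,Y)$ and $(X^*,Y^*)$, and noting that by stability the depth of any event in $X^*$ (resp.\ $Y^*$) coincides with its depth in the ambient $X$ (resp.\ $Y$), the (label, depth) profiles of $X \setminus X^*$ and $Y \setminus Y^*$ must agree as sets. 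Hence $f_{X,Y}$ restricts to a bijection $X^* \to Y^*$. Since $f_{X,Y}(e) \in Y^*$ forces $e \in X^*$, and $X^*$ is left-closed in $X$, we get $d \in X^*$ and therefore $f_{X,Y}(d) \in Y^*$. This shows $f_{X,Y}(d) \leq_Y f_{X,Y}(e)$, and injectivity upgrades it to $<_Y$; the reverse implication is symmetric.

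The transition clauses of Definition~\ref{def:hh} are then straightforward. Given $(X,Y,f_{X,Y}) \in S$ and a forward transition $X \tranc a X'$ adding event $e$ of depth $k$, RB yields $Y \trand a Y'$ with $(X',Y') \in R$, and Proposition~\ref{prop:rb depth} forces the new event $e'$ of $Y'$ to have the same depth $k$. Since adding a maximal event leaves the depths of existing events unchanged, $f_{X',Y'}$ must agree with $f_{X,Y}$ on $X$ and send $e$ to $e'$, giving $f_{X',Y'}\res X = f_{X,Y}$. The reverse transition clause is handled symmetrically, yielding $f_{X,Y}\res X' = f_{X',Y'}$. Combined with order-preservation, $S$ is an HH bisimulation, so $\mc C \hheqt \mc D$.
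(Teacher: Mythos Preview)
Your proposal is correct and follows essentially the same route as the paper: define $f_{X,Y}$ by matching label and depth (using Corollary~\ref{cor:rb levels}), prove it is an order-isomorphism, and check the hereditary clauses via Proposition~\ref{prop:rb depth}. The only difference is cosmetic: for order-preservation the paper argues by contradiction, tracking the depth of each individual reverse transition through Proposition~\ref{prop:rb depth}, whereas you argue directly by applying Corollary~\ref{cor:rb levels} to the endpoint pair $(X^*,Y^*)$ and reading off that $f_{X,Y}$ restricts to a bijection $X^* \to Y^*$; both arguments rest on the same lemmas and the same uniqueness of events by (label, depth).
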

\Comment{
\begin{proof}
The proof is rather different from that of Theorem~\ref{thm:bed}, and comes out quite rapidly using Corollary~\ref{cor:rb levels} and Proposition~\ref{prop:rb depth}.
Let $\mc C,\mc D \in \Cstable$ be without equidepth auto-concurrency.
Suppose that $R$ is an RB between $\mc C$ and $\mc D$.

Suppose that $R(X,Y)$.
Define $f_{X,Y}: X \to Y$ by $f_{X,Y}(d) = e$ where $e$ is the unique $e \in Y$ such that
$\depth Y e = \depth X d$ and $\lab(e) = \lab(d)$.
Then $f_{X,Y}$ is well-defined and is a bijection by Corollary~\ref{cor:rb levels}.
Clearly $f$ preserves labels and depth.

We next show $f_{X,Y}$ is order-preserving.
Note that events in $X$ and $Y$ are determined uniquely by their depth and label.
Suppose $d <_X d'$, with $\lab(d) = a$, $\lab(d') = a'$, $\depth X d = k$, $\depth X {d'} = k'$.
Suppose for a contradiction that $f_{X,Y}(d) \not <_Y f_{X,Y}(d')$.
Then there is $Y'' \in C_{\mc D}$ such that $f_{X,Y}(d') \in Y'' \subsetneq Y$ and $f_{X,Y}(d) \notin Y''$.
So there is a series of reverse transitions starting from $Y$ in which we reverse $f_{X,Y}(d)$ but not $f_{X,Y}(d')$,
i.e.\ $Y \rdtrand {a_1}{k_1} Y_1 \cdots \rdtrand {a_n}{k_n} Y_n \rdtrand {a}{k} Y' \supseteq Y''$,
with $\lab(a_i) \neq a'$ or $\depth Y {a_i} \neq k'$ for $i = 1,\ldots,n$.

Since $R$ is an RB, there is a corresponding series of reverse transitions starting from $X$.
The corresponding events must have the same depth, by Proposition~\ref{prop:rb depth}.
So $X \rdtranc {a_1}{k_1} X_1 \cdots \rdtranc {a_n}{k_n} X_n \rdtranc {a}{k} X'$,
with $R(X_i,Y_i)$ for $i = 1,\ldots,n$ and $R(X',Y')$.

But then the last transition $X_n \rdtranc {a}{k} X'$ must have $d$ as its underlying event,
and none of the previous $n$ transitions can have $d'$ as underlying event, since either the depth or label does not match.  This means that $d \not<_X d'$, which is a contradiction.

Symmetrically we can show that $e <_Y e'$ implies $f_{X,Y}^{-1}(e) <_X f_{X,Y}^{-1}(e)$
(the definition of $f_{X,Y}^{-1}$ is just the same as that of $f_{X,Y}$ with $X$ and $Y$ swapped).
Hence $f_{X,Y}$ is order-preserving.

\begin{claim}
If $X \tranc a X'$, $Y \trand a Y'$ and $R(X,Y)$, $R(X',Y')$, then
$f_{X',Y'}\res X = f_{X,Y}$.
\end{claim}
Proof of Claim:
Let $X' \setminus X = \{d\}$, $Y' \setminus Y = \{e\}$.
We first show $f_{X',Y'}(d) = e$.
By Proposition~\ref{prop:rb depth} there is $k$ such that 
$X \dtranc a k X'$ and $Y \dtrand a k Y'$.
So $\depth {Y'} e = \depth {X'} d$ and $\lab(e) = \lab(d)$.
Hence $f_{X',Y'}(d) = e$ by definition of $f_{X',Y'}$.
Take any $d' \in X$.
Then $f_{X',Y'}(d') = e'$ where $e'$ is the unique $e' \in Y'$ such that
$\depth {Y'} {e'} = \depth {X'} {d'}$ and $\lab(e') = \lab(d')$.
Since $f_{X',Y'}(d) = e$, we must have $e' \in Y$.
Also $\depth {Y'} {e'} =\depth Y {e'}$ and $\depth {X'} {d'} = \depth X {d'}$.
Hence $f_{X',Y'}(d') = f_{X,Y}(d')$, and the Claim is shown.

Now define $R'(X,Y,f)$ iff $R(X,Y)$ and $f = f_{X,Y}$ (any $X,Y$).
We claim that $R'$ is an HH bisimulation between $\mc C$ and $\mc D$.
We have shown that $f_{X,Y}$ is an isomorphism between $(X, <_X,\lab_{\mc C}\res X)$ and $(Y, <_Y,\lab_{\mc D}\res Y)$.  Clearly $R'(\emptyset,\emptyset,f_{\emptyset,\emptyset})$.
Assume $R'(X,Y,f)$.

Suppose $X \tranc a X'$.
Then there is $Y'$ such that $Y \trand a Y'$ and $R(X',Y')$.
So $R'(X',Y',f_{X',Y'})$.  Also $f_{X',Y'}\res X = f_{X,Y}$ by the Claim above.

The remaining cases where we suppose $Y \trand a Y'$ and $X \rtranc a X'$ are handled similarly,
again using the Claim.
\end{proof}
} 
\begin{proof}[Proof (sketch)]
Let $\mc C,\mc D \in \Cstable$ be without equidepth auto-concurrency.
Suppose that $R$ is an RB between $\mc C$ and $\mc D$.
Suppose that $R(X,Y)$.
Define $f_{X,Y}: X \to Y$ by $f_{X,Y}(d) = e$ where $e$ is the unique $e \in Y$ such that
$\depth Y e = \depth X d$ and $\lab(e) = \lab(d)$.
Then $f_{X,Y}$ is well-defined and is a bijection by Corollary~\ref{cor:rb levels}.
Clearly $f$ preserves labels and depth.

We can show that $f_{X,Y}$ is order-preserving, using Proposition~\ref{prop:rb depth}.
Note that events in $X$ and $Y$ are determined uniquely by their depth and label.
We can also show the following,
again using Proposition~\ref{prop:rb depth}:
\begin{claim}
If $X \tranc a X'$, $Y \trand a Y'$ and $R(X,Y)$, $R(X',Y')$, then
$f_{X',Y'}\res X = f_{X,Y}$.
\end{claim}

Now define $R'(X,Y,f)$ iff $R(X,Y)$ and $f = f_{X,Y}$ (for $X \in C_{\mc C}, Y \in C_{\mc D}$).
We claim that $R'$ is an HH bisimulation between $\mc C$ and $\mc D$.
We have already seen that $f_{X,Y}$ is an isomorphism between $(X, <_X,\lab_{\mc C}\res X)$ and $(Y, <_Y,\lab_{\mc D}\res Y)$.
We can use the Claim to show the remaining properties needed to
verify that $R'$ is an HH bisimulation.
\end{proof}
We have strengthened Theorem~\ref{thm:bed} in two ways: we use stable configuration structures rather than prime event structures, and we weaken the assumption of no auto-concurrency to no equidepth auto-concurrency.

\section{Conclusions}

We have investigated bisimulation based on forward and reverse concurrent steps, so-called reverse step bisimulation (RSB) equivalence, in the stable configuration structure setting.
We showed that forward steps are unnecessary, while reverse steps can be restricted to ones where all events have the same label, and even the same causal depth.
We further showed that the same power can be got with single event transitions, provided that we match events on depth as well as label.  Thus we can claim that, in the reversible setting, the observational power of concurrent events is equivalent to that of single events together with depth.

We used these results to extend Bednarczyk's work and show that reverse bisimulation, where forward and reverse transitions are single events, coincides with hereditary
history-preserving (HH) bisimulation in the absence of equidepth auto-concurrency (multiple events with the same label and
at the same
depth).

Future work includes investigating the power of the reverse forms of pomset 
bisimulation, where transitions are pomsets, and weak history preserving 
bisimulation. We also intend to study the versions of forward bisimulations, 
such as, for example, interleaving bisimulation, pomset bisimulation and 
weak history preserving bisimulation, that satisfy additionally the 
hereditary property as expressed by the last condition of Definition~\ref{def:hh}.





\paragraph*{Acknowledgements}
We thank the anonymous referees for their helpful comments and suggestions.
The second author acknowledges partial support by EPSRC grant EP/G039550/1.

\bibliographystyle{eptcs}

\end{document}